\newtheorem{theorem}{Theorem}[section]
\newtheorem{lemma}[theorem]{Lemma}
\newtheorem{corollary}[theorem]{Corollary}
\newtheorem{definition}[theorem]{Definition}
\newtheorem{remark}{Remark}
\newtheorem{example}[theorem]{Example}
\def\BibTeX{{\rm B\kern-.05em{\sc i\kern-.025em b}\kern-.08em
    T\kern-.1667em\lower.7ex\hbox{E}\kern-.125emX}}
\begin{document}

\title{New non-binary quantum codes from skew constacyclic codes over the ring $\mathbb{F}_{p^m}+v\mathbb{F}_{p^m}+v^2 \mathbb{F}_{p^m}$\\
}

\author{\IEEEauthorblockN{ Ram Krishna Verma\textsuperscript{st}}
\IEEEauthorblockA{\textit{Department of Mathematics} \\
\textit{Indian Institute of Technology Patna}\\
Patna- 801106, India\\
email-ram.pma15@iitp.ac.in}
\and
\IEEEauthorblockN{ Om Prakash\textsuperscript{nd}}
\IEEEauthorblockA{\textit{Department of Mathematics} \\
	\textit{Indian Institute of Technology Patna}\\
	Patna- 801106, India\\
	email-om@iitp.ac.in}
\and
\IEEEauthorblockN{ Ashutosh Singh\textsuperscript{rd}}
\IEEEauthorblockA{\textit{Department of Mathematics} \\
	\textit{Indian Institute of Technology Patna}\\
	Patna- 801106, India\\
	email-ashutosh\_1921ma05@iitp.ac.in}

}

\maketitle

\begin{abstract} In this article, we construct new non-binary quantum codes from skew constacyclic codes over finite commutative non-chain ring $\mathcal{R}= \mathbb{F}_{p^m}[v]/\langle v^3 =v \rangle$ where $p$ is an odd prime and $m \geq 1$.
	In order to obtain such quantum codes, first we study the structural properties of skew constacyclic codes and their Euclidean duals over the ring $\mathcal{R}$. Then a necessary and sufficient condition for skew constacyclic codes over $\mathcal{R}$ to contain their Euclidean duals is established. Finally, with the help of CSS construction and using Gray map, many new non-binary quantum codes are obtained over $\mathbb{F}_{p^m}$.
\end{abstract}

\begin{IEEEkeywords}
Skew constacyclic codes, Quantum codes, Gray map.
\end{IEEEkeywords}

\section{Introduction}
Quantum error-correcting codes have been of interest to many researchers and extensively studied due to their applications in quantum communication. These codes are advantageous to protect the information carried by quantum states on the quantum channels against environmental and operational noise(decoherence). In $1995$, Shore \cite{Shor} introduced first quantum code with parameters $[[9,1,3]]$. In $1998$, Calderbank et al. \cite{Calderbank} discussed a method for construction of binary quantum codes from the classical self orthogonal codes over $GF(4)$. Further, in $1999$, E.M. Rains \cite{Rains} generalized the concept for the non-binary codes and constructed non-binary stabilizer quantum codes from classical linear codes over $\mathbb{F}_q$. Thereafter, several optimal quantum codes have been constructed by using classical linear codes over finite field (see\cite{Gaurdia10,Gottesman10,Grassl04,Li04,Li10}).\\

It has also noticed that classical linear (cyclic or constacyclic) codes over the finite rings can be viewed as an excellent resource to produce many good quantum codes. Recently, many works have been reported on codes over finite rings in order to construct quantum codes. In $2009$, Qian et al. \cite{Qian09} obtained many binary quantum codes from cyclic codes over the ring $\mathbb{F}_2+u\mathbb{F}_2$, where $u^2=0$.  Further, Kai and Zhu \cite{Kai11} presented several new quantum codes from cyclic codes over the ring $\mathbb{F}_4+u\mathbb{F}_4$ with $u^2=0$.

In 2015, Ashraf and Mohammad \cite{Ashraf1} constructed quantum codes over $\mathbb{F}_{p}$ from cyclic codes over the non-chain ring $\mathbb{F}_{p}+v\mathbb{F}_{p}$. Dertli et al. \cite{Dertli15} studied cyclic codes over the ring $\mathbb{F}_{2}+u\mathbb{F}_{2}+v\mathbb{F}_{2}+uv\mathbb{F}_{2}$ and constructed some new binary quantum codes. Then, Ashraf and Mohammad \cite{Ashraf2} again generalized their work over the ring $\mathbb{F}_{q}+u\mathbb{F}_{q}+v\mathbb{F}_{q}+uv\mathbb{F}_{q}$ and obtained many new non-binary quantum codes.\\
Constacyclic codes are a generalization of the cyclic codes and often used by many researchers to produce quantum code with good parameters. Gao and Wang \cite{Gao1} discussed the structural properties of $u$-constacyclic codes over the ring $\mathbb{F}_p+u\mathbb{F}_p$ where $u^2=1$ and constructed several new non-binary quantum codes. Later, Ma et al. \cite{Ma} considered the ring $\mathbb{F}_q+v\mathbb{F}_q+v^2\mathbb{F}_q$ and obtained some new quantum codes.\\
On the other side, Boucher et al. \cite{Boucher} generalized cyclic codes as skew cyclic ($\theta$-cyclic) over $\mathbb{F}_{p^m}$, where $\theta$ is the automorphism of $\mathbb{F}_{p^m}$. They considered skew cyclic codes of length $n$ over $\mathbb{F}_{p^m}$ as left ideals of the non-commutative residue polynomial ring $\frac{\mathbb{F}_{p^m}[x; \theta]}{\langle x^n-1 \rangle}$. The vital inspiration for considering codes in this setup is that skew polynomial ring is a not a unique factorization domain. Thus polynomials exhibit more factorizations, and consequently, more ideals than commutative setup. Afterward, many researchers started to study skew cyclic codes over different rings in order to construct linear code with better parameters. In $2015$, Shi et al. \cite{sole} presented structure of skew cyclic codes over the finite commutative non-chain ring $\mathbb{F}_q+v\mathbb{F}_q+v^2\mathbb{F}_q$, where $v^3=v$.\\
Recently, Bag et al. \cite{bag} obtained some non-binary quantum codes from skew constacyclic codes over a non-chain ring $\mathbb{F}_q+u\mathbb{F}_q+v\mathbb{F}_q+uv \mathbb{F}_q $ with $u^2=1, v^2=1$ and $uv=vu$.\\
Motivated by the above studies, here we studied the structure of skew constacyclic codes and their duals over the non-chain ring $\mathcal{R} =\mathbb{F}_q+v\mathbb{F}_q+v^2\mathbb{F}_q$, where $v^3=v$. Further, we used the Gray map defined in \cite{Ma} to get the Gray images of such codes.
The article is organized as follows. Section $2$ discusses the structure of ring $\mathcal{R}$ and review of some results on linear codes over $\mathcal{R}$. Section $3$ studies the skew constacyclic codes over the ring $R$ and their Gray images under the Gray map defined in \cite{Ma}.  A necessary and sufficient condition for skew constacyclic codes over $\mathcal{R}$ to contain their duals has been established in Section $4$. As an application, we get several new non-binary quantum codes included in Table \ref{tab1}. Section $5$ concludes the work.

\section{Preliminaries}
For an odd prime $p$ and positive integer $m$, let $\mathbb{F}_{p^m}$ be the finite field with characteristic $p$. Throughout the article, $\mathcal{R}$ denotes the ring $\mathbb{F}_{p^m}+v\mathbb{F}_{p^m}+v^2 \mathbb{F}_{p^m}$ where $v^3=v$. Clearly, the ring $\mathcal{R}$ is a finite commutative non-chain ring with characteristic $p$ and order $p^{3m}$. Moreover, the ring $\mathcal{R}$ is semi-local principal ideal ring with three maximal ideals $\langle v \rangle, \langle v+1 \rangle$ and $\langle v-1 \rangle$.\\
 Let $2 \zeta \cong 1 (mod\ p)$
and  $\eta_0=1-v^2, \eta_1= \zeta (v^2+v), \eta_2=\zeta (v^2-v)$. Then $\sum_{i=0}^{2} \eta_i=1$ in $\mathcal{R}$ and
\begin{equation*}
\eta_i \eta_j=\begin{cases}
\eta_i, & \text{if $i=j$}\\
0, & \text{if $i\neq j$}
\end{cases}.
\end{equation*}
Therefore, by Chinese Remainder Theorem, we decompose the ring $\mathcal{R}$ as
\begin{align*}
\mathcal{R}=&~\eta_0\mathcal{R}+ \eta_1\mathcal{R}+ \eta_2\mathcal{R}\\
 \cong &~ \eta_0 \mathbb{F}_{p^m}+ \eta_1\mathbb{F}_{p^m}+ \eta_2\mathbb{F}_{p^m}
\end{align*}
Hence, any element $r=a+bv+cv^2 \in \mathcal{R}$ can be uniquely expressed as $r=\beta_0 \eta_0+\beta_1  \eta_1+\beta_2  \eta_2$ where $\beta_0,\beta_1,\beta_2 \in \mathbb{F}_{p^m}$.\\
Let $M\in GL_{3}(\mathbb{F}_{p^m})$, where $GL_{3}(\mathbb{F}_{p^m})$ is the set of all $3\times 3$ invertible matrices over $\mathbb{F}_{p^m}$. Now, by the Gray map defined in \cite{Ma},
\begin{align*}
\psi: \mathcal{R} \longrightarrow \mathbb{F}_{p^m}^{3}
\end{align*}
by
\begin{align}\label{map 1}
r\longmapsto&(\beta_{0},\beta_{1},\beta_{2})M\\
=& \boldsymbol{r}M
\end{align}
Here, we use $\boldsymbol{r}$ for the vector $(\beta_{0},\beta_{1},\beta_{2})$. It is easy to check that the map $\psi$ is linear and can be naturally extended from $\mathcal{R}^{n}$ to $\mathbb{F}_{p^m}^{3 n}$ componentwise.

Recall that nonempty subset $C$ of $\mathcal{R}^{n}$ is called a linear code of length $n$ over $\mathcal{R}$ if it is an $\mathcal{R}$-submodule of $\mathcal{R}^{n}$ and elements of $C$ are called codewords. The dual of a linear code ${C}$ of length $n$ over $\mathcal{R}$ is defined as ${C}^{\perp}=\{a \in \mathcal{R}^{n}\mid a \cdot b=0~\forall~b\in {C} \}$ is also a linear code. The code ${C}$ is said to be self-orthogonal if $ {C} \subseteq {C}^{\perp} $ and self-dual if ${C}^{\perp}= {C}$. In order to consider dual code, we use Euclidean inner product. For any codeword $c=(c_{0},c_{1},\dots,c_{n-1})\in \mathcal{C}$, the Hamming weight $w_{H}(c)$ is define as the number of non-zero components in $c$ and the Hamming distance between two codewords $c$ and $c'$ is given by $d_{H}(c,c')=w_{H}(c-c')$. The Hamming distance for a code $\mathcal{C}$ is define as $d_{H}(\mathcal{C})=min\{d_{H}(c,c')\mid c\neq c', \forall~c,c'\in \mathcal{C} \}$. It is easy to check that the Gray map $\psi$ is distance preserving linear map.\\
Now, we review some results on linear codes over $\mathcal{R}$ which can be easily proved by using basic concept of algebraic coding theory.

Let $\mathcal{C}$ be a linear code of length $n$ over $\mathcal{R}$. Define
\begin{align*}
\mathcal{A}_{0}=\{x_{0}\in \mathbb{F}_{p^m}^{n}\mid \exists x_{1},x_{2} \in  \mathbb{F}_{p^m}^{n} , \eta_{0}x_{0}+\eta_{1}x_{1}+\eta_{2}x_{2} \in {C}\};\\
\mathcal{A}_{1}=\{x_{1}\in \mathbb{F}_{p^m}^{n}\mid \exists x_{0},x_{2} \in  \mathbb{F}_{p^m}^{n},\eta_{0}x_{0}+\eta_{1}x_{1}+\eta_{2}x_{2} \in {C}\};\\
\mathcal{A}_{2}=\{x_{2}\in \mathbb{F}_{p^m}^{n}\mid \exists x_{0},x_{1} \in  \mathbb{F}_{p^m}^{n},\eta_{0}x_{0}+\eta_{1}x_{1}+\eta_{2}x_{2} \in {C}\}.
\end{align*}
Then each $\mathcal{A}_{i}$ is a linear code over $ \mathbb{F}_{p^m}$ for $i=0,1,2$ and ${C}= \eta_{0}\mathcal{A}_{1}\oplus \eta_{1}\mathcal{A}_{1} \oplus \eta_{2}\mathcal{A}_{2}$. Moreover, if $G_i$ is the generator matrix for a linear code $\mathcal{A}_{i}$ over $ \mathbb{F}_{p^m}$, then the generator matrix $G$ for linear code $\mathcal{C}$ over $\mathcal{R}$ can be written as  $$G=\begin{pmatrix}
\eta_{0}G_{0} \\
\eta_{1}G_{1} \\
\eta_{2}G_{2} \\
\end{pmatrix}$$.

\begin{lemma}\label{cor k} \cite{gao}
If $\mathcal{C}= \eta_{0}\mathcal{A}_{0}\oplus \eta_{1}\mathcal{A}_{1} \oplus \eta_{2}\mathcal{A}_{2}$ is a linear code of length $n$ over $\mathcal{R}$ where $\mathcal{A}_{i}$ is the $[n, k_{i}, d(\mathcal{A}_{i})]$ linear code over $\mathbb{F}_{p^m}$, then $\phi(\mathcal{C})$ is the $[3n,\sum_{i=0}^{2}k_{i}, d]$ linear code.
\end{lemma}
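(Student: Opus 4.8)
The plan is to check the three code parameters of $\psi(\mathcal{C})$ one at a time: that $\psi(\mathcal{C})$ is $\mathbb{F}_{p^m}$-linear, that it has length $3n$, and that its $\mathbb{F}_{p^m}$-dimension equals $\sum_{i=0}^{2}k_i$; the quantity $d$ is then just the resulting minimum Hamming distance. Linearity and length are immediate: $\mathcal{C}$ is an $\mathcal{R}$-submodule of $\mathcal{R}^n$ and hence in particular an $\mathbb{F}_{p^m}$-subspace, while the (componentwise extended) Gray map sends $\mathcal{R}^n$ into $\mathbb{F}_{p^m}^{3n}$ and is $\mathbb{F}_{p^m}$-linear by its very definition as $r\mapsto\boldsymbol{r}M$; so $\psi(\mathcal{C})$ is an $\mathbb{F}_{p^m}$-subspace of $\mathbb{F}_{p^m}^{3n}$.

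For the dimension I would argue by counting cardinalities. The key point is that $\psi\colon\mathcal{R}^n\to\mathbb{F}_{p^m}^{3n}$ is a \emph{bijection}: on a single coordinate it is $\boldsymbol{r}\mapsto\boldsymbol{r}M$ with $M\in GL_3(\mathbb{F}_{p^m})$, which under the identification $r\leftrightarrow\boldsymbol{r}$ is an invertible linear map of $\mathbb{F}_{p^m}^{3}$, and $n$ copies of a bijection give a bijection. In particular $\psi$ is injective, so $|\psi(\mathcal{C})|=|\mathcal{C}|$. To compute $|\mathcal{C}|$ I would use the decomposition $\mathcal{C}=\eta_{0}\mathcal{A}_{0}\oplus\eta_{1}\mathcal{A}_{1}\oplus\eta_{2}\mathcal{A}_{2}$ recalled before the statement: since the $\eta_i$ are pairwise orthogonal idempotents with $\eta_0+\eta_1+\eta_2=1$, every $c\in\mathcal{C}$ can be written as $c=\eta_0 a_0+\eta_1 a_1+\eta_2 a_2$ with $a_i\in\mathcal{A}_i$, and this representation is unique because multiplying it by $\eta_j$ returns $\eta_j a_j$. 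Hence the sum is direct and $|\mathcal{C}|=\prod_{i=0}^{2}|\mathcal{A}_i|=\prod_{i=0}^{2}(p^m)^{k_i}=(p^m)^{\sum_{i=0}^{2}k_i}$, so $\dim_{\mathbb{F}_{p^m}}\psi(\mathcal{C})=\log_{p^m}|\psi(\mathcal{C})|=\sum_{i=0}^{2}k_i$.

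It remains to record the minimum distance: by definition $d=d_H(\psi(\mathcal{C}))$, and since $\psi$ is a distance-preserving linear map (as noted in the Preliminaries) this coincides with the Gray distance of $\mathcal{C}$, which is why the parameters are listed as $[3n,\sum_{i=0}^{2}k_i,d]$. Overall the proof is largely bookkeeping; the only place that deserves genuine care is the counting step, where one must confirm both that $\psi$ is bijective (so that Gray images preserve cardinalities) and that $\eta_{0}\mathcal{A}_{0}+\eta_{1}\mathcal{A}_{1}+\eta_{2}\mathcal{A}_{2}$ is an \emph{internal direct sum} (so that $|\mathcal{C}|$ is the product of the $|\mathcal{A}_i|$) — both of which reduce quickly to $M\in GL_3(\mathbb{F}_{p^m})$ and to the idempotent relations $\eta_i^2=\eta_i$, $\eta_i\eta_j=0$ for $i\neq j$, and $\eta_0+\eta_1+\eta_2=1$.
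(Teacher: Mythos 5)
Your proof is correct and is exactly the standard argument one would expect here: the paper itself gives no proof of this lemma (it is quoted from the reference \cite{gao}), and your three steps --- linearity of $\psi$, bijectivity of $\psi$ via $M\in GL_3(\mathbb{F}_{p^m})$ together with the CRT coordinates, and the cardinality count $|\mathcal{C}|=\prod_{i=0}^{2}|\mathcal{A}_i|$ from the orthogonal idempotent decomposition --- are precisely the "basic concepts of algebraic coding theory" the authors allude to. No gaps; the only cosmetic remark is that the statement's $\phi$ should read $\psi$, the Gray map defined in the Preliminaries.
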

\begin{lemma} (\cite{jgao}, Proposition $1$) If $\mathcal{C}= \eta_{0}\mathcal{A}_{0}\oplus \eta_{1}\mathcal{A}_{1} \oplus \eta_{2}\mathcal{A}_{2}$ is a linear code of length $n$ over $\mathcal{R}$, then $\mathcal{C}^{\perp}= \eta_{0}\mathcal{A}_{0}^{\perp}\oplus \eta_{1}\mathcal{A}_{1}^{\perp} \oplus \eta_{2}\mathcal{A}_{2}^{\perp}$. Moreover, $C$ is a self-orthogonal code over $\mathcal{R}$ if and only if each $\mathcal{A}_{i}$ is self-orthogonal code over $\mathbb{F}_{p^m}$ for $i=0,1,2$.
	
\end{lemma}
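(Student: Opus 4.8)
The plan is to push everything through the orthogonal idempotent decomposition $\mathcal{R}\cong \eta_0\mathbb{F}_{p^m}\oplus \eta_1\mathbb{F}_{p^m}\oplus \eta_2\mathbb{F}_{p^m}$ together with the relations $\eta_i\eta_j=\delta_{ij}\eta_i$ and $\sum_{i}\eta_i=1$ recorded above. First I would take two words $a=\eta_0a_0+\eta_1a_1+\eta_2a_2$ and $b=\eta_0b_0+\eta_1b_1+\eta_2b_2$ in $\mathcal{R}^n$ with $a_i,b_i\in\mathbb{F}_{p^m}^n$, and expand the Euclidean inner product coordinatewise. Because $\eta_i\eta_j=0$ for $i\neq j$, every cross term dies and one is left with $a\cdot b=\sum_{i=0}^{2}\eta_i\,(a_i\cdot b_i)$, where $a_i\cdot b_i$ is the inner product in $\mathbb{F}_{p^m}$. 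Since the coordinates of an element of $\mathcal{R}$ in the basis $\{\eta_0,\eta_1,\eta_2\}$ are unique (this is exactly the CRT isomorphism), $a\cdot b=0$ in $\mathcal{R}$ if and only if $a_i\cdot b_i=0$ in $\mathbb{F}_{p^m}$ for each $i$; in particular $\eta_i c=0$ with $c\in\mathbb{F}_{p^m}^n$ forces $c=0$.

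With this identity the formula $\mathcal{C}^{\perp}=\eta_0\mathcal{A}_0^{\perp}\oplus\eta_1\mathcal{A}_1^{\perp}\oplus\eta_2\mathcal{A}_2^{\perp}$ follows by double inclusion. For ``$\supseteq$'', take $a$ with each $a_i\in\mathcal{A}_i^{\perp}$ and any $b\in\mathcal{C}=\eta_0\mathcal{A}_0\oplus\eta_1\mathcal{A}_1\oplus\eta_2\mathcal{A}_2$; then $b_i\in\mathcal{A}_i$, so $a_i\cdot b_i=0$ for all $i$, hence $a\cdot b=0$ and $a\in\mathcal{C}^{\perp}$. For ``$\subseteq$'', take $a\in\mathcal{C}^{\perp}$, write $a=\sum_i\eta_ia_i$, and fix $j$ and an arbitrary $x\in\mathcal{A}_j$. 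By the direct-sum description of $\mathcal{C}$ established just before the lemma, the word $\eta_jx$ lies in $\mathcal{C}$ (set the other two components to $0$), so $0=a\cdot(\eta_jx)=\eta_j\,(a_j\cdot x)$, whence $a_j\cdot x=0$; as $x$ was arbitrary, $a_j\in\mathcal{A}_j^{\perp}$, so $a$ lies in the right-hand side. (Alternatively, the ``$\supseteq$'' inclusion plus the count $|\mathcal{C}^{\perp}|=p^{3mn}/|\mathcal{C}|=\prod_i p^{m(n-k_i)}=|\eta_0\mathcal{A}_0^{\perp}\oplus\eta_1\mathcal{A}_1^{\perp}\oplus\eta_2\mathcal{A}_2^{\perp}|$ already yields equality.)

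For the ``moreover'' part, $\mathcal{C}$ is self-orthogonal means $\eta_0\mathcal{A}_0\oplus\eta_1\mathcal{A}_1\oplus\eta_2\mathcal{A}_2\subseteq\eta_0\mathcal{A}_0^{\perp}\oplus\eta_1\mathcal{A}_1^{\perp}\oplus\eta_2\mathcal{A}_2^{\perp}$. Since $\mathcal{R}^n$ is the internal direct sum of the submodules $\eta_i\mathbb{F}_{p^m}^n$, this containment holds if and only if $\eta_i\mathcal{A}_i\subseteq\eta_i\mathcal{A}_i^{\perp}$ for every $i$, and since $x\mapsto\eta_ix$ is injective on $\mathbb{F}_{p^m}^n$ (again by uniqueness of the $\eta$-coordinates), this is equivalent to $\mathcal{A}_i\subseteq\mathcal{A}_i^{\perp}$ for every $i$. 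I do not expect a genuine obstacle here; the only points needing care are the two facts that make the $\eta_i$-components separate cleanly — namely $\eta_ic=0\Rightarrow c=0$ and $\eta_ix\in\mathcal{C}$ whenever $x\in\mathcal{A}_i$ — and both are immediate from the CRT decomposition of $\mathcal{R}$ and the direct-sum structure of $\mathcal{C}$ recalled above.
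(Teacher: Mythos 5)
Your argument is correct. Note that the paper itself gives no proof of this lemma --- it is imported verbatim from the cited reference (\cite{jgao}, Proposition 1) --- so there is no internal proof to compare against; your computation $a\cdot b=\sum_{i=0}^{2}\eta_i\,(a_i\cdot b_i)$, exploiting $\eta_i\eta_j=\delta_{ij}\eta_i$ and the uniqueness of $\eta$-coordinates, is exactly the standard argument such a reference uses, and both inclusions (as well as the ``moreover'' equivalence) go through as you describe. The only ingredients you rely on beyond the idempotent relations are that $\eta_j\mathcal{A}_j\subseteq\mathcal{C}$ and that $\eta_j c=0$ forces $c=0$, both of which are available from the direct-sum description of $\mathcal{C}$ and the CRT decomposition recalled in the preliminaries; your alternative cardinality count $|\mathcal{C}||\mathcal{C}^{\perp}|=|\mathcal{R}|^{n}$ is also legitimate here since $\mathcal{R}$, being a finite product of fields, is Frobenius.
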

Next result illustrates units in the ring $\mathcal{R}$ can be found in Theorem $1$ of \cite{Ma}.

 \begin{lemma}
 	Let $\delta=\alpha+v\beta+v^2 \gamma \in \mathcal{R}$. Then $\delta$ is unit in $\mathcal{R}$ if and only if $\alpha$, $\alpha+\beta+\gamma$ and $\alpha-\beta+\gamma$ are units in $ \mathbb{F}_{p^m}$.
 \end{lemma}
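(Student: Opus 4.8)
The plan is to exploit the Chinese Remainder decomposition $\mathcal{R} \cong \eta_0\mathbb{F}_{p^m} \oplus \eta_1\mathbb{F}_{p^m} \oplus \eta_2\mathbb{F}_{p^m}$ already established above, under which $\mathcal{R}$ is isomorphic to a direct product of three copies of $\mathbb{F}_{p^m}$. Since an element of a finite direct product of rings is a unit precisely when each of its coordinates is a unit, it suffices to write $\delta = \alpha + v\beta + v^2\gamma$ in the form $\beta_0\eta_0 + \beta_1\eta_1 + \beta_2\eta_2$ and to identify $\beta_0 = \alpha$, $\beta_1 = \alpha + \beta + \gamma$, $\beta_2 = \alpha - \beta + \gamma$. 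The claim follows at once, because being a unit in the field $\mathbb{F}_{p^m}$ is the same as being nonzero.

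To extract the coordinates $\beta_i$, I would compute the action of $v$ on each primitive idempotent using $v^3 = v$. One gets $v\eta_0 = v(1-v^2) = v - v^3 = 0$, hence $v^2\eta_0 = 0$ and $\delta\eta_0 = \alpha\eta_0$; similarly $v\eta_1 = \zeta(v^3 + v^2) = \zeta(v + v^2) = \eta_1$, hence $v^2\eta_1 = \eta_1$ and $\delta\eta_1 = (\alpha + \beta + \gamma)\eta_1$; and $v\eta_2 = \zeta(v^3 - v^2) = -\zeta(v^2 - v) = -\eta_2$, hence $v^2\eta_2 = \eta_2$ and $\delta\eta_2 = (\alpha - \beta + \gamma)\eta_2$. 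Reading off the coefficients gives exactly $\beta_0 = \alpha$, $\beta_1 = \alpha + \beta + \gamma$, $\beta_2 = \alpha - \beta + \gamma$.

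Equivalently, and perhaps more transparently, one can identify $\mathcal{R}$ with $\mathbb{F}_{p^m}[v]/\langle v(v-1)(v+1)\rangle$. Since $p$ is odd, the factors $v$, $v-1$, $v+1$ are pairwise coprime, so CRT gives the evaluation isomorphism $f(v) \mapsto (f(0), f(1), f(-1))$ onto $\mathbb{F}_{p^m}^3$, sending $\delta \mapsto (\alpha,\ \alpha + \beta + \gamma,\ \alpha - \beta + \gamma)$; again $\delta$ is a unit iff all three coordinates are nonzero. The necessity direction can even be seen with no CRT at all: each of these three evaluations is a ring homomorphism $\mathcal{R} \to \mathbb{F}_{p^m}$, and ring homomorphisms carry units to units, so if $\delta$ is a unit then $\alpha$, $\alpha+\beta+\gamma$ and $\alpha-\beta+\gamma$ are units in $\mathbb{F}_{p^m}$.

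There is no real obstacle here; the only points needing a little care are verifying the idempotent computations (in particular that $2\zeta \equiv 1 \pmod p$, which uses that $p$ is odd, makes the $\eta_i$ genuine orthogonal idempotents summing to $1$) and noting explicitly that in the product ring $\mathbb{F}_{p^m}^3$ the condition ``unit in each coordinate'' is literally the stated condition on $\alpha$, $\alpha+\beta+\gamma$, $\alpha-\beta+\gamma$.
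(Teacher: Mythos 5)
Your proof is correct, and it is exactly the argument the paper intends: the lemma is quoted from Theorem 1 of the cited work of Ma et al., and the proof there (and the machinery the paper sets up in its preliminaries) is precisely the idempotent/CRT decomposition $\delta=\alpha\eta_0+(\alpha+\beta+\gamma)\eta_1+(\alpha-\beta+\gamma)\eta_2$ that you compute. Your idempotent calculations check out, and the evaluation-at-$0,\pm1$ reformulation is a clean equivalent way to see the same coordinates.
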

Throughout the article, $\delta=\alpha+v\beta+v^2 \gamma$ represents a unit in $\mathcal{R}$ and $\lambda_0=\alpha$, $\lambda_1=\alpha+\beta+\gamma$ and $\lambda_2=\alpha-\beta+\gamma$ represent corresponding units in $ \mathbb{F}_{p^m}$.

\section{Structure of skew constacyclic codes over the ring $\mathcal{R}$}
We begin the section with structure of skew polynomial ring over $\mathcal{R}$ and some definition and results for the skew polynomial rings over $\mathbb{F}_{p^m}$ which are useful to study the structure of skew constacyclic codes over the ring $\mathcal{R}$.

 Let $Aut(\mathbb{F}_{p^m})$ be the set of all automorphisms over
  the field $\mathbb{F}_{p^m}$ and $\Theta \in Aut(\mathbb{F}_{p^m})$ be a non-trivial automorphism. Now, define a map $\sigma : \mathcal{R} \longrightarrow \mathcal{R}$ as
 $$\sigma(a+bv+cv^2)= \Theta(a)+\Theta(b)v+\Theta(c)v^2,$$
 where $a,b,c \in \mathbb{F}_{p^m}$. Clearly $\sigma$ is an automorphism of $\mathcal{R}$. Now, we define the skew polynomial ring $$\mathcal{R}[x;\sigma]:=\{a_{0}+a_{1}x+\cdots+a_{n}x^{n}\mid a_{i}\in \mathcal{R}~ \forall~ i, n \in  \mathbb{N} \}$$ equipped with usual addition of polynomials and multiplication of polynomials is defined under the rule $(ax^{i})(bx^{j})=a\sigma^{i}(b)x^{i+j}$. Clearly, $\mathcal{R}[x;\sigma]$ is a noncommutative ring. If $\Theta$ is the identity automorphism, then  $\mathcal{R}[x;\sigma]$ is simply the polynomial ring $\mathcal{R}[x]$.
\begin{definition}
	Let $\delta$ be a unit in $\mathcal{R}$ and $\sigma\in Aut(\mathcal{R})$. Suppose $\tau_{\sigma,\delta}$ be a skew $\delta$-constacyclic shift operator from $ \mathcal{R}^n \longrightarrow \mathcal{R}^n$ defined by $\tau_{\sigma,\delta}(c)=(\sigma(\delta c_{n-1}),\sigma(c_{0}),\dots,\sigma(c_{n-2}))\in C$ for $c=(c_{0},c_{1},\dots,c_{n-1})\in C$.\\
	Then a linear code $C$ of length $n$ over $\mathcal{R}$ is called skew $(\sigma,\delta)$-constacyclic code if $\tau_{\sigma,\delta}(C)=C$
	In particular, for $\delta=1$ and $\delta=-1$, $C$ is called skew cyclic and skew negacyclic code, respectively. In case, if $\sigma$ is the identity automorphism, then $C$ is a constacyclic code.
\end{definition}{}
 The center of skew polynomial ring $\mathcal{R}[x;\sigma]$ is denoted by $Z(\mathcal{R}[x;\sigma])$.
\begin{theorem}\cite{sole}\label{qw}
Let $g(x)=x^n-1 \in \mathcal{R}[x;\sigma]$. Then $g(x) \in Z( \mathcal{R}[x;\sigma])$ if and only if the order of automorphism $\sigma$ divides $n$. Moreover, for a unit $\delta$ in $\mathcal{R}$, the polynomial $x^n-\delta \in Z( \mathcal{R}[x;\sigma])$ if and only if $\sigma$ fixes $\delta$ and order of $\sigma$ divides $n$.
\end{theorem}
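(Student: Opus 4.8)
The plan is to reduce the statement about the noncommutative ring $\mathcal{R}[x;\sigma]$ to a direct computation of when a polynomial commutes with every element. First I would recall that a polynomial $f(x)\in\mathcal{R}[x;\sigma]$ lies in the center $Z(\mathcal{R}[x;\sigma])$ if and only if it commutes with every scalar $a\in\mathcal{R}$ and with the indeterminate $x$. Commuting with $x$ forces $f(x)x = x f(x)$; writing $f(x)=\sum_i f_i x^i$ and using the multiplication rule $(ax^i)(bx^j)=a\sigma^i(b)x^{i+j}$, the relation $x f(x)=f(x)x$ becomes $\sum_i \sigma(f_i)x^{i+1}=\sum_i f_i x^{i+1}$, i.e. $\sigma(f_i)=f_i$ for all $i$: every coefficient of a central polynomial must be fixed by $\sigma$. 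Commuting with an arbitrary scalar $a\in\mathcal{R}$ gives, from $f(x)a=af(x)$, the condition $f_i\sigma^i(a)=a f_i$ for all $i$ and all $a$; taking $f_i$ a unit (as it is for $x^n-\delta$ when we look at the top and constant terms) this forces $\sigma^i(a)=a$ for every $a$, hence $\sigma^i=\mathrm{id}$, i.e. the exponents $i$ appearing in $f$ are multiples of $\mathrm{ord}(\sigma)$.

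Now I would apply this to the two cases. For $g(x)=x^n-1$: the constant term $-1$ and the coefficient $1$ of $x^n$ are automatically fixed by $\sigma$ (since $\sigma$ fixes the prime field and in particular $\pm 1$), so the only real constraint is that the exponent $n$ be a multiple of $\mathrm{ord}(\sigma)$; conversely if $\mathrm{ord}(\sigma)\mid n$ then $\sigma^n=\mathrm{id}$ and one checks $x^n$ commutes with every scalar and with $x$, so $x^n-1\in Z(\mathcal{R}[x;\sigma])$. For $x^n-\delta$ with $\delta$ a unit: by the coefficient-fixing condition above we additionally need $\sigma(\delta)=\delta$, i.e. $\sigma$ fixes $\delta$, together with $\mathrm{ord}(\sigma)\mid n$; conversely these two conditions make $x^n-\delta$ commute with $x$ (coefficients fixed) and with every scalar $a$ (since $\sigma^n(a)=a$), hence $x^n-\delta\in Z(\mathcal{R}[x;\sigma])$.

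The only mildly delicate point — and the part I would write out most carefully — is justifying that checking commutativity with scalars and with $x$ suffices, and being careful that the monomial $x^n-\delta$ really does satisfy $f_i\sigma^i(a)=af_i$ for \emph{all} $a$ once $\sigma^n=\mathrm{id}$ and $\sigma(\delta)=\delta$: here $\mathcal{R}$ is commutative, so $\delta\sigma^n(a)=\delta a=a\delta$, which is exactly what is needed. Everything else is a routine manipulation of the twisted multiplication rule, so I expect no serious obstacle; the statement is essentially a packaging of the standard fact that the center of a skew polynomial ring $S[x;\sigma]$ with $S$ commutative is $S^{\sigma}[x^{e}]$ where $e=\mathrm{ord}(\sigma)$ and $S^{\sigma}$ is the fixed subring, specialized to the two polynomials of interest.
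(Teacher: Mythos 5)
Your argument is correct. Note that the paper itself gives no proof of this theorem --- it is quoted from the reference \cite{sole} --- so there is nothing to compare against; your computation is exactly the standard one that establishes the result: a polynomial is central iff it commutes with the generators $x$ and the scalars $a\in\mathcal{R}$, commutation with $x$ forces each coefficient to be $\sigma$-fixed, and commutation with scalars at a degree where the coefficient is a unit forces $\sigma^i=\mathrm{id}$, which for the leading term of $x^n-\delta$ gives $\mathrm{ord}(\sigma)\mid n$. You correctly flag the one delicate point, namely that the implication $f_i\sigma^i(a)=af_i\Rightarrow\sigma^i=\mathrm{id}$ needs $f_i$ to be a unit (or at least a non-zero-divisor), which holds for the leading coefficient $1$ and the constant term $-\delta$; this matters because $\mathcal{R}$ has zero divisors, and it is also why your closing remark that $Z(S[x;\sigma])=S^{\sigma}[x^{e}]$ for commutative $S$ should be taken with the caveat that equality can fail over rings with zero divisors --- but that aside plays no role in the actual proof, which is complete as written.
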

If $C$ is a skew $(\sigma,\delta)$-constacyclic code of length $n$ over $\mathcal{R}$, then similar to the constacyclic codes, we can identify each codeword $c=(c_{0},c_{1},\dots,c_{n-1})\in C$ by a polynomial $c(x)=c_{0}+c_{1}x+\dots+c_{n-1}x^{n-1}$ in the quotient $\mathcal{R}[x;\sigma]/\langle x^{n}-\delta\rangle$ under the correspondence $c=(c_{0},c_{1},\dots,c_{n-1})\mapsto c(x)=(c_{0}+c_{1}x+\dots+c_{n-1}x^{n-1})$ $mod$ $(x^{n}-\delta)$. From Theorem \ref{qw}, if $\delta $ fixed by $\sigma$ and order of $\sigma$ divides $n$, then $\langle x^{n}-\delta\rangle$ forms two sided ideal in $\mathcal{R}[x;\sigma]$. Therefore, the structure $\mathcal{R}[x;\sigma]/\langle x^{n}-\delta\rangle$ forms residue polynomial ring. Thus, a skew $(\sigma,\delta)$-constacyclic code of length $n$ over $\mathcal{R}$ can be defined as a left ideal in the quotient ring $\mathcal{R}[x;\sigma]/\langle x^{n}-\delta\rangle$ generated by right divisor of $( x^{n}-\delta)$. However, in the case when order of $\sigma$ does not divide $n$ then the quotient $\mathcal{R}[x;\sigma]/\langle x^{n}-\delta\rangle$ is not a ring. In this case, it forms left $\mathcal{R}[x;\sigma]$ module, where multiplication is defined by $$r(x)*(l(x)-(x^{n}-\delta))=r(x)*l(x)+( x^{n}-\delta ).$$ Therefore, a skew $(\sigma,\delta)$-constacyclic code of length $n$ over $\mathcal{R}$ can be considered as $\mathcal{R}[x;\sigma]$-submodule of module $\mathcal{R}[x;\sigma]/\langle x^{n}-\delta\rangle$. In both cases, $\mathcal{C}$ is generated by monic right divisors of $x^n-\delta$ in $\mathcal{R}[x;\sigma]$.
	Hence, for further, we assume that order of automorphism $\sigma$ divides $n$ and $\sigma$ fixes $\delta$.

\begin{theorem}\cite{Ma}\label{thm field}
	Let $\lambda \in \mathbb{F}^{*}_{p^m}$ and $\Theta \in Aut(\mathbb{F}_{p^m})$. Let $C$ be a linear code of length $n$ over $\mathbb{F}_{p^m}$. Then $C$ is skew $(\Theta,\lambda)$-constacyclic over  $\mathbb{F}_{p^m}$ if and only if there exists a polynomial $g(x) \in \mathbb{F}_{p^m}[x;\Theta]/\langle x^{n}-\lambda\rangle$ such that $C= \langle g(x)\rangle$ and $g(x)$ is a right divisor of $(x^{n}-\lambda)$
	in $\mathbb{F}_{p^m}[x;\Theta]$.
\end{theorem}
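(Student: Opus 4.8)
The plan is to translate the combinatorial condition $\tau_{\Theta,\lambda}(C)=C$ into an algebraic one via the standard polynomial identification, and then invoke the division algorithm in the skew polynomial ring $\mathbb{F}_{p^m}[x;\Theta]$.

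First I would identify $\mathbb{F}_{p^m}^n$ with $\mathbb{F}_{p^m}[x;\Theta]/\langle x^n-\lambda\rangle$ via $(c_0,\dots,c_{n-1})\mapsto c_0+c_1x+\cdots+c_{n-1}x^{n-1}$, and check the key compatibility. Using the multiplication rule $(ax^i)(bx^j)=a\Theta^i(b)x^{i+j}$, left multiplication by $x$ sends $c(x)$ to $\Theta(c_0)x+\cdots+\Theta(c_{n-2})x^{n-1}+\Theta(c_{n-1})x^n$, which modulo $x^n-\lambda$ (reducing $x^n\equiv\lambda$ and using that $\Theta$ fixes $\lambda$, so $\Theta(\lambda c_{n-1})=\lambda\Theta(c_{n-1})$) is exactly the image of $\tau_{\Theta,\lambda}(c)$. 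Hence $C$ is skew $(\Theta,\lambda)$-constacyclic iff $C$ is closed under left multiplication by $x$. Since $C$ is already an $\mathbb{F}_{p^m}$-subspace, iterating this shows closure under left multiplication by every power of $x$, hence by every polynomial; so the condition is equivalent to $C$ being a left $\mathbb{F}_{p^m}[x;\Theta]$-submodule of the quotient, i.e.\ (under our standing assumptions, which make $x^n-\lambda$ central) a left ideal.

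For the ``only if'' direction I would use that $\mathbb{F}_{p^m}[x;\Theta]$ is a right Euclidean domain: the right division algorithm works because leading coefficients lie in the field $\mathbb{F}_{p^m}$ and are therefore invertible, so $\mathbb{F}_{p^m}[x;\Theta]$ is a principal left ideal ring. Pulling a left ideal $C$ of the quotient back to $\widetilde C=\{f\in\mathbb{F}_{p^m}[x;\Theta]:\bar f\in C\}$ yields a left ideal containing $x^n-\lambda$; write $\widetilde C=\mathbb{F}_{p^m}[x;\Theta]\,g(x)$ and normalize $g$ to be monic (possible since $\mathbb{F}_{p^m}$ is a field). Because $x^n-\lambda\in\widetilde C$, right division gives $x^n-\lambda=q(x)g(x)$ with the remainder forced to be $0$, so $g$ is a monic right divisor of $x^n-\lambda$ and $C=\langle g(x)\rangle$ in the quotient. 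Conversely, if $g(x)$ is a right divisor of $x^n-\lambda$ then $\langle g(x)\rangle$ is a well-defined left ideal of the quotient, hence a skew $(\Theta,\lambda)$-constacyclic code by the first paragraph.

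The main point to get right is the bookkeeping forced by noncommutativity: one must consistently use \emph{right} division (and ``right divisor''), since that is precisely what guarantees $\langle g(x)\rangle$ is stable under left multiplication and hence well-defined modulo the central element $x^n-\lambda$. One must also justify that $\widetilde C$ is exactly $\mathbb{F}_{p^m}[x;\Theta]\,g(x)$ and not a proper subset, which follows from the order-preserving correspondence between left ideals of $\mathbb{F}_{p^m}[x;\Theta]$ containing $x^n-\lambda$ and left ideals of the quotient ring. The remaining verifications — linearity of the polynomial identification, and that monic normalization does not change the ideal — are routine.
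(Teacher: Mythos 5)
Your proof is correct, and it is the standard argument; the paper itself imports this statement from the cited reference \cite{Ma} without giving a proof, so there is nothing in the text to compare against. Two small points worth noting: you rightly flag that the identification of $\tau_{\Theta,\lambda}$ with left multiplication by $x$ needs $\Theta(\lambda)=\lambda$ (and the ideal, rather than module, formulation needs the order of $\Theta$ to divide $n$) --- hypotheses the theorem statement omits but which the paper imposes as standing assumptions just before it; and for the ``if'' direction the right-divisibility of $x^n-\lambda$ by $g$ is not actually needed, since any left submodule of the quotient is already closed under left multiplication by $x$ and hence skew $(\Theta,\lambda)$-constacyclic.
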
{}

\begin{theorem}
	Let $C= \sum_{i=0}^{2} \eta_i \mathcal{A}_i$ be a linear code of length $n$ over $\mathcal{R}$. Then $C$ is a skew $(\sigma,\delta)$-constacyclic code if and only if $\mathcal{A}_i$ is skew $(\Theta,\lambda_i)$-constacyclic code over $\mathbb{F}_{p^m}$ for $i=0,1,2$.
\end{theorem}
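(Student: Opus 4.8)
The plan is to exploit the idempotent decomposition $\mathcal{R}=\eta_0\mathcal{R}\oplus\eta_1\mathcal{R}\oplus\eta_2\mathcal{R}$ and to show that the skew $\delta$-constacyclic shift $\tau_{\sigma,\delta}$ splits along this decomposition into the three shifts $\tau_{\Theta,\lambda_0},\tau_{\Theta,\lambda_1},\tau_{\Theta,\lambda_2}$. First I would record two elementary facts. (i) Each idempotent $\eta_i$ is fixed by $\sigma$: indeed $\eta_0=1-v^2$ and $\eta_1,\eta_2$ have coefficients $\zeta\in\mathbb{F}_p$, all of which $\Theta$ fixes, so $\sigma(\eta_i)=\eta_i$. (ii) $\delta\eta_i=\lambda_i\eta_i$ for $i=0,1,2$: using $v^3=v$ one checks $v\eta_0=0$, $v\eta_1=\eta_1$, $v\eta_2=-\eta_2$, hence the Chinese-remainder coordinates of $\delta=\alpha+v\beta+v^2\gamma$ are exactly $(\lambda_0,\lambda_1,\lambda_2)=(\alpha,\alpha+\beta+\gamma,\alpha-\beta+\gamma)$, which is precisely relation (ii).

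With these in hand, write a codeword $c=(c_0,\dots,c_{n-1})\in\mathcal{R}^n$ as $c=\eta_0 a_0+\eta_1 a_1+\eta_2 a_2$ with $a_i=((a_i)_0,\dots,(a_i)_{n-1})\in\mathbb{F}_{p^m}^n$ (coordinatewise CRT). Then a short computation on each coordinate, using (i), (ii) and the orthogonality of the $\eta_i$, yields
\[
\tau_{\sigma,\delta}(c)=\eta_0\,\tau_{\Theta,\lambda_0}(a_0)+\eta_1\,\tau_{\Theta,\lambda_1}(a_1)+\eta_2\,\tau_{\Theta,\lambda_2}(a_2).
\]
For instance the first coordinate of the left side is $\sigma(\delta c_{n-1})=\sum_{i}\sigma(\lambda_i\eta_i(a_i)_{n-1})=\sum_{i}\eta_i\,\Theta(\lambda_i(a_i)_{n-1})$, which is the CRT expression of the first coordinate of the right side, and the remaining coordinates $\sigma(c_{j-1})$ are handled identically. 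This identity is the heart of the argument.

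For the ``only if'' direction, suppose $\tau_{\sigma,\delta}(C)=C$. Since $C$ is an $\mathcal{R}$-submodule, $\eta_i c\in C$ for every $c\in C$, so $\eta_i a_i\in C$ for all $a_i\in\mathcal{A}_i$, i.e. $\eta_i\mathcal{A}_i\subseteq C$. Applying $\tau_{\sigma,\delta}$ and the displayed identity gives $\eta_i\,\tau_{\Theta,\lambda_i}(a_i)=\tau_{\sigma,\delta}(\eta_i a_i)\in C$; reading off its $i$-th CRT coordinate and using the definition of $\mathcal{A}_i$ shows $\tau_{\Theta,\lambda_i}(a_i)\in\mathcal{A}_i$. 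Hence $\tau_{\Theta,\lambda_i}(\mathcal{A}_i)\subseteq\mathcal{A}_i$, and since $\tau_{\Theta,\lambda_i}$ is a bijection of the finite set $\mathbb{F}_{p^m}^n$ we obtain $\tau_{\Theta,\lambda_i}(\mathcal{A}_i)=\mathcal{A}_i$; that is, $\mathcal{A}_i$ is skew $(\Theta,\lambda_i)$-constacyclic. Conversely, if every $\mathcal{A}_i$ is skew $(\Theta,\lambda_i)$-constacyclic, then for $c=\sum_i\eta_i a_i\in C$ the displayed identity gives $\tau_{\sigma,\delta}(c)=\sum_i\eta_i\,\tau_{\Theta,\lambda_i}(a_i)\in\eta_0\mathcal{A}_0\oplus\eta_1\mathcal{A}_1\oplus\eta_2\mathcal{A}_2=C$, so $\tau_{\sigma,\delta}(C)\subseteq C$, with equality by bijectivity of $\tau_{\sigma,\delta}$ (recall $\sigma\in Aut(\mathcal{R})$ and $\delta$ is a unit). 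The only genuinely delicate point is the coordinate bookkeeping in the displayed identity; once that is established cleanly, both implications follow immediately.
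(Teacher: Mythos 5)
Your proof is correct and follows essentially the same route as the paper: both rest on the identity $\tau_{\sigma,\delta}(\sum_i\eta_i a_i)=\sum_i\eta_i\,\tau_{\Theta,\lambda_i}(a_i)$, obtained by decomposing each coordinate via the orthogonal idempotents and using $\delta\eta_i=\lambda_i\eta_i$. Your write-up is somewhat more explicit than the paper's (you verify $\sigma(\eta_i)=\eta_i$ and $\delta\eta_i=\lambda_i\eta_i$ and justify passing from containment to equality via bijectivity of the shifts), but the underlying argument is the same.
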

\begin{proof}
	Let $(a_{i,0},a_{i,1}, \cdots, a_{i,{n-1}} ) \in \mathcal{A}_i$ and $a_j= \sum_{i=1}^{2} \eta_i a_{i,j}$ for $j=0,1, \cdots, {n-1}$. Then $a=(a_0,a_1, \cdots, a_{n-1}) \in C$. Suppose $C$ is skew $(\sigma,\delta)$-constacyclic code of length $n$ over $\mathcal{R}$, then for any codeword $c \in C$, we have $\tau_{\sigma,\delta}(c) \in C$.
	Now,
	\begin{align*}
	\tau_{\sigma,\delta}(a)=&~(\delta \sigma(a_{n-1}),\sigma(a_0), \cdots, \sigma(a_{n-2}))\\
	=&~ \eta_0(\lambda_0 \Theta(a_{0,n-1}),\Theta(a_{0,0}), \cdots, \Theta(a_{0,n-2}))\\
	+&~ \eta_1(\lambda_1 \Theta(a_{1,n-1}),\Theta(a_{1,0}), \cdots, \Theta(a_{1,n-2}))\\
	+&~ \eta_2(\lambda_2 \Theta(a_{2,n-1}),\Theta(a_{2,0}), \cdots, \Theta(a_{2,n-2}))\\
	\in&~ C= \sum_{i=0}^{2} \eta_i \mathcal{A}_i.
	\end{align*}
	As the decomposition of linear codes over the ring $\mathcal{R}$ is unique. Therefore,
	$(\lambda_i \Theta(a_{i,n-1}),\Theta(a_{i,0}), \cdots, \Theta(a_{i,n-2})) \in \mathcal{A}_i$ for $i=0,1,2$. Consequently, $\mathcal{A}_i$ is skew $(\Theta,\lambda_i)$-constacyclic code over $\mathbb{F}_{p^m}$ for $i=0,1,2$.
	Conversely, suppose $\mathcal{A}_i$ is skew $(\Theta,\lambda_i)$-constacyclic code of length $n$ over $\mathbb{F}_{p^m}$ for $i=0,1,2$, and $a=(a_0,a_1, \cdots, a_{n-1}) \in C$ where $a_j= \sum_{i=1}^{2} \eta_i a_{i,j}$ for $j=0,1, \cdots, {n-1}$. Then $(a_{i,0},a_{i,1}, \cdots, a_{i,{n-1}} ) \in \mathcal{A}_i$. Therefore, from above equation, $\tau_{\sigma,\delta}(a) \in C$, i.e., $C$ is skew $(\sigma,\delta)$-constacyclic code over $\mathcal{R}$.
\end{proof}

In the next result we will provide generator polynomial of skew $(\sigma,\delta)$-constacyclic code over $\mathcal{R}$ in terms of generator polynomial of skew $(\Theta,\lambda_i)$-constacyclic code over $\mathbb{F}_{p^m}$ for $i=0,1,2$.

\begin{theorem}
	Let $C= \sum_{i=0}^{2} \eta_i \mathcal{A}_i$ be a skew $(\sigma,\delta)$-constacyclic code of length $n$ over $\mathcal{R}$ and $f_i(x)$ is the generator polynomial of skew $(\Theta,\lambda_i)$-constacyclic code $\mathcal{A}_i$ over $\mathbb{F}_{p^m}$ for $i=0,1,2$, respectively. Then
	\begin{itemize}
		\item[1.] there exist a polynomial $f(x) \in \mathcal{R}[x;\sigma]$ such that $C =\langle f(x) \rangle$ and $(x^n-\delta)$ is right divisible by $f(x)$, where $f(x)=\sum_{i=0}^{2} \eta_i f_i(x)$.
		\item[2.] $C= \langle \eta_0 f_0(x), \eta_1 f_1(x) , \eta_2 f_2(x)  \rangle $ and $\mid C \mid = p^{{3mn}- \sum_{i=0}^{2} \text{deg}f_i}$.
	\end{itemize}
\end{theorem}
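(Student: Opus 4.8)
The plan is to prove both parts by working inside $\mathcal{R}[x;\sigma]/\langle x^{n}-\delta\rangle$ and pushing everything through the Chinese-remainder decomposition $C=\eta_{0}\mathcal{A}_{0}\oplus\eta_{1}\mathcal{A}_{1}\oplus\eta_{2}\mathcal{A}_{2}$. The preliminary observation I would record first is that each idempotent $\eta_{i}$ is \emph{central} in $\mathcal{R}[x;\sigma]$: the $\eta_{i}$ are polynomials in $v$ with coefficients $1$ and $\zeta\in\mathbb{F}_{p}$, and since $\Theta$ fixes the prime subfield we get $\sigma(\eta_{i})=\eta_{i}$, hence $x\eta_{i}=\sigma(\eta_{i})x=\eta_{i}x$. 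Together with the orthogonality relations for the $\eta_{i}$, this lets me move the $\eta_{i}$ freely past polynomials and kill all cross terms. I would also note that $\sigma|_{\mathbb{F}_{p^m}}=\Theta$, so $\mathbb{F}_{p^m}[x;\Theta]$ sits inside $\mathcal{R}[x;\sigma]$ compatibly, and that reading $\delta=\alpha+v\beta+v^{2}\gamma$ through CRT (evaluating $\eta_{0},\eta_{1},\eta_{2}$ at $v=0,1,-1$) gives $\delta=\lambda_{0}\eta_{0}+\lambda_{1}\eta_{1}+\lambda_{2}\eta_{2}$; consequently $\eta_{i}(x^{n}-\delta)=\eta_{i}(x^{n}-\lambda_{i})$, which identifies the $i$-th CRT component of $\mathcal{R}[x;\sigma]/\langle x^{n}-\delta\rangle$ with $\mathbb{F}_{p^m}[x;\Theta]/\langle x^{n}-\lambda_{i}\rangle$, carrying $\eta_{i}f_{i}(x)$ to $f_{i}(x)$.

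For part 1, set $f(x)=\sum_{i=0}^{2}\eta_{i}f_{i}(x)$ and argue $C=\langle f(x)\rangle$ by double inclusion. Since $\eta_{i}f(x)=\eta_{i}f_{i}(x)\in\langle f(x)\rangle$, centrality of $\eta_{i}$ together with the identification above gives $\eta_{i}\mathcal{A}_{i}=\langle\eta_{i}f_{i}(x)\rangle\subseteq\langle f(x)\rangle$; summing over $i$ yields $C\subseteq\langle f(x)\rangle$. Conversely, any element of $\langle f(x)\rangle$ is $r(x)f(x)$ with $r(x)=\sum_{i}\eta_{i}r_{i}(x)$, $r_{i}(x)\in\mathbb{F}_{p^m}[x;\Theta]$; expanding and collapsing cross terms gives $r(x)f(x)=\sum_{i}\eta_{i}r_{i}(x)f_{i}(x)\in\sum_{i}\eta_{i}\mathcal{A}_{i}=C$. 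For the right-divisibility claim, Theorem \ref{thm field} supplies $g_{i}(x)\in\mathbb{F}_{p^m}[x;\Theta]$ with $g_{i}(x)f_{i}(x)=x^{n}-\lambda_{i}$; then $g(x):=\sum_{i}\eta_{i}g_{i}(x)$ satisfies $g(x)f(x)=\sum_{i}\eta_{i}g_{i}(x)f_{i}(x)=\sum_{i}\eta_{i}(x^{n}-\lambda_{i})=x^{n}-\delta$, so $f(x)$ is a right divisor of $x^{n}-\delta$.

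For part 2, the equality $C=\langle\eta_{0}f_{0}(x),\eta_{1}f_{1}(x),\eta_{2}f_{2}(x)\rangle$ follows at once from part 1: each $\eta_{i}f_{i}(x)\in C$, so the right-hand side is contained in $C$, and $f(x)=\sum_{i}\eta_{i}f_{i}(x)$ lies in the right-hand side, so $C=\langle f(x)\rangle$ is contained in it. For the cardinality, use $|C|=|\mathcal{A}_{0}|\,|\mathcal{A}_{1}|\,|\mathcal{A}_{2}|$ from the direct-sum decomposition; since $f_{i}(x)$ is a right divisor of $x^{n}-\lambda_{i}$ of degree $\deg f_{i}$, the standard skew analogue of the cyclic-code basis argument shows $\{f_{i}(x),xf_{i}(x),\dots,x^{\,n-\deg f_{i}-1}f_{i}(x)\}$ is an $\mathbb{F}_{p^m}$-basis of $\mathcal{A}_{i}$, so $|\mathcal{A}_{i}|=(p^{m})^{\,n-\deg f_{i}}$ and hence $|C|=(p^{m})^{\,3n-\sum_{i=0}^{2}\deg f_{i}}$, which is the claimed cardinality.

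I do not anticipate a real obstacle: the argument is essentially CRT bookkeeping plus the fact that the $\eta_{i}$ are central. The only point that demands genuine care rather than routine symbol-pushing is keeping the non-commutativity straight — always multiplying by the auxiliary polynomials $g_{i}$ and $r_{i}$ on the \emph{left} of $f_{i}$, so that the notion of ``right divisor'' is preserved under the decomposition, and verifying the compatibility $\eta_{i}(x^{n}-\delta)=\eta_{i}(x^{n}-\lambda_{i})$ that underlies the identification of the CRT components.
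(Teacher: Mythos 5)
Your proof is correct and takes essentially the same route as the paper's: decompose $C$ via the central idempotents $\eta_i$, establish $C=\langle f(x)\rangle$ by double inclusion, and lift the factorizations $x^n-\lambda_i=h_i(x)f_i(x)$ to $x^n-\delta=\bigl(\sum_i\eta_i h_i(x)\bigr)f(x)$. In fact you are slightly more careful than the paper, which writes this last product in the order $f(x)\bigl(\sum_i\eta_i h_i(x)\bigr)$ (the wrong side for a right divisor) and asserts $|\mathcal{A}_i|=(p^m)^{\,n-\deg f_i}$ without the basis argument you supply.
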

\begin{proof}
	\begin{itemize}
	\item[1.] Since $C= \sum_{i=0}^{2} \eta_i \mathcal{A}_i$ is a skew $(\sigma,\delta)$-constacyclic code of length $n$ over $\mathcal{R}$ and $f_i(x)$ is the generator polynomial of skew $(\Theta,\lambda_i)$-constacyclic code $\mathcal{A}_i$ over $\mathbb{F}_{p^m}$ for $i=0,1,2$, respectively. Therefore, $\eta_i f_i(x) \in \eta_i \mathcal{A}_i \subseteq C$, which implies that $\langle \sum_{i=0}^{2}\eta_i f_i(x) \rangle \subseteq C $.
	On the other side, let $g(x)\in C$. As $C= \sum_{i=0}^{2} \eta_i \mathcal{A}_i$, there exist some polynomials $h_i(x) \in \mathbb{F}_{p^m}[x;\Theta]$ for $i=0,1,2$ such that $f(x)=\sum_{i=0}^{2} \eta_i h_i(x) f_i(x)$.  Therefore, $f(x) \in \langle \sum_{i=0}^{2}\eta_i f_i(x) \rangle $, i.e. $C=\langle \sum_{i=0}^{2}\eta_i f_i(x) \rangle$.\\
	Since, by Theorem \ref{thm field}, $f_i(x)$ right divides $(x^n-\lambda_i)$ for $i=0,1,2$. Therefore, there exist polynomials $h_i(x) \in \mathbb{F}_{p^m}[x;\Theta]$ such that $(x^n-\lambda_i) =h_i(x) f_i(x)$ for $i=0,1,2$. Now, $(\sum_{i=0}^{2}\eta_i f_i(x)) (\sum_{i=0}^{2}\eta_i h_i(x)) = (x^n-\delta)$. Thus, $\sum_{i=0}^{2}\eta_i f_i(x)=f(x)$ is a right divisor of $(x^n-\delta)$.
	
	\item[2.] From first part, we can write $C= \langle \eta_0 f_0(x), \eta_1 f_1(x) , \eta_2 f_2(x)  \rangle $. Since $C= \sum_{i=0}^{2} \eta_i \mathcal{A}_i$. Thus
	\begin{align*}
	\mid C\mid =&\mid \mathcal{A}_0 \mid \mid \mathcal{A}_1 \mid \mid \mathcal{A}_2 \mid\\
	=&~ p^{mn- \text{deg}f_0(x)}p^{mn -\text{deg}f_1(x)} p^{mn- \text{deg}f_2(x)}\\
	=&~ p^{{3mn}- \sum_{i=0}^{2} \text{deg}f_i(x)}.
	\end{align*}
\end{itemize}
\end{proof}
\begin{remark}
	For a polynomial $h(x)= \sum\limits_{j=0}^k h_{j} x^j \in \mathcal{R}$ of degree $k$, the skew reciprocal polynomial of $h(x)$ is defined as  $h^*(x):= \sum\limits_{j=0}^k \sigma^j(h_{k-j}) x^j$.
\end{remark}
In next two results we discuss the structure of Euclidean dual of skew $(\sigma,\delta)$-constacyclic code of length $n$ over $\mathcal{R}$.

\begin{theorem}
If $C= \sum_{i=0}^{2} \eta_i \mathcal{A}_i$ is a skew $(\sigma,\delta)$-constacyclic code of length $n$ over $\mathcal{R}$, then $C^{\perp}=\sum_{i=0}^{2} \eta_i \mathcal{A}_i^{\perp}$ is skew $(\sigma,\delta^{-1})$-constacyclic code over $\mathcal{R}$ where $\mathcal{A}_i^{\perp}$ is skew $(\Theta,\lambda_i^{-1})$-constacyclic code over $\mathbb{F}_{p^m}$ for $i=0,1,2$.
\end{theorem}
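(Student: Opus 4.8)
The plan is to reduce the statement to the field case via the Chinese Remainder decomposition already set up above. First I would invoke \cite{jgao} (the second lemma): for any linear code $C=\sum_{i=0}^{2}\eta_i\mathcal{A}_i$ over $\mathcal{R}$ one has $C^{\perp}=\sum_{i=0}^{2}\eta_i\mathcal{A}_i^{\perp}$, so the asserted decomposition of $C^{\perp}$ is immediate and only the shift has to be identified. Since $C$ is skew $(\sigma,\delta)$-constacyclic, the structure theorem above (characterizing skew $(\sigma,\delta)$-constacyclicity of $C$ through the $\mathcal{A}_i$) gives that each $\mathcal{A}_i$ is skew $(\Theta,\lambda_i)$-constacyclic over $\mathbb{F}_{p^m}$.

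The core is the field-level fact: if $\mathcal{A}$ is skew $(\Theta,\lambda)$-constacyclic of length $n$ over $\mathbb{F}_{p^m}$, then $\mathcal{A}^{\perp}$ is skew $(\Theta,\lambda^{-1})$-constacyclic. To establish it I would take $y\in\mathcal{A}^{\perp}$ and show $\tau_{\Theta,\lambda^{-1}}(y)\cdot x=0$ for every $x=(x_0,\dots,x_{n-1})\in\mathcal{A}$. Because $\tau_{\Theta,\lambda}$ is a bijection of $\mathbb{F}_{p^m}^{n}$ (composition of a cyclic coordinate permutation, coordinatewise $\Theta$, and scaling of one coordinate by the unit $\lambda$) that maps $\mathcal{A}$ onto itself, there is $x'\in\mathcal{A}$ with $\tau_{\Theta,\lambda}(x')=x$; solving $x_0=\Theta(\lambda x'_{n-1})$ and $x_j=\Theta(x'_{j-1})$ yields $x'=\bigl(\Theta^{-1}(x_1),\dots,\Theta^{-1}(x_{n-1}),\lambda^{-1}\Theta^{-1}(x_0)\bigr)$. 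Then $y\cdot x'=0$, and applying the automorphism $\Theta$ to this scalar identity (using that $\Theta$ is additive, multiplicative and fixes $0$) turns it term by term into $\tau_{\Theta,\lambda^{-1}}(y)\cdot x=0$. Hence $\tau_{\Theta,\lambda^{-1}}(\mathcal{A}^{\perp})\subseteq\mathcal{A}^{\perp}$, and since $\tau_{\Theta,\lambda^{-1}}$ is injective and $\mathcal{A}^{\perp}$ is finite, equality follows.

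Applying this to each $i$ shows $\mathcal{A}_i^{\perp}$ is skew $(\Theta,\lambda_i^{-1})$-constacyclic. It remains to identify the scalar over $\mathcal{R}$: a direct computation with $\eta_0=1-v^2$, $\eta_1=\zeta(v^2+v)$, $\eta_2=\zeta(v^2-v)$ and $2\zeta\equiv1$ shows $\delta=\lambda_0\eta_0+\lambda_1\eta_1+\lambda_2\eta_2$, so by the orthogonality of the idempotents $\eta_i$ we get $\delta^{-1}=\lambda_0^{-1}\eta_0+\lambda_1^{-1}\eta_1+\lambda_2^{-1}\eta_2$; moreover $\sigma$ fixes $\delta^{-1}$ since it fixes $\delta$. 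Reading the structure theorem in the reverse direction, with $\delta$ replaced by $\delta^{-1}$ and the $\lambda_i$ by $\lambda_i^{-1}$, then gives that $C^{\perp}=\sum_{i=0}^{2}\eta_i\mathcal{A}_i^{\perp}$ is skew $(\sigma,\delta^{-1})$-constacyclic over $\mathcal{R}$.

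The main obstacle I anticipate is purely clerical: verifying that applying $\Theta$ to $y\cdot x'=0$ reproduces exactly $\tau_{\Theta,\lambda^{-1}}(y)\cdot x$ — tracking where $\Theta$, $\Theta^{-1}$ and $\lambda^{\pm1}$ act in each coordinate — and making explicit the passage ``$\tau_{\Theta,\lambda}(\mathcal{A})=\mathcal{A}$, hence every $x\in\mathcal{A}$ has a preimage $x'\in\mathcal{A}$,'' which is what legitimizes using $y\cdot x'=0$. Everything else is a direct consequence of results already in the paper.
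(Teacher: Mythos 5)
Your proof is correct, but it takes a more self-contained route than the paper does. The paper's entire proof is a one-line citation: since $\sigma$ fixes $\delta$ and the order of $\sigma$ divides $n$, it invokes Lemma~3.1 of Jitman--Ling--Udomkavanich (stated there for skew constacyclic codes over finite \emph{chain} rings) to conclude that $C^{\perp}$ is skew $(\sigma,\delta^{-1})$-constacyclic, leaving the decomposition $C^{\perp}=\sum_{i}\eta_i\mathcal{A}_i^{\perp}$ and the identification of the $\lambda_i^{-1}$ implicit in the earlier lemmas. You instead prove the key duality fact from scratch at the field level --- pulling $x\in\mathcal{A}$ back through $\tau_{\Theta,\lambda}$ to $x'=\bigl(\Theta^{-1}(x_1),\dots,\Theta^{-1}(x_{n-1}),\lambda^{-1}\Theta^{-1}(x_0)\bigr)$ and applying $\Theta$ to $y\cdot x'=0$, which does reproduce $\tau_{\Theta,\lambda^{-1}}(y)\cdot x=0$ exactly --- and then lift via the idempotent decomposition, checking $\delta=\sum_i\lambda_i\eta_i$ and hence $\delta^{-1}=\sum_i\lambda_i^{-1}\eta_i$. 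Your approach buys two things: it is verifiable without consulting the external reference, and it avoids the slight mismatch in the paper's citation, since $\mathcal{R}$ is a non-chain ring whereas the cited lemma is formulated for chain rings (its proof does generalize, but your argument makes that unnecessary). The paper's approach buys brevity. One cosmetic remark: surjectivity of $\tau_{\Theta,\lambda}$ on $\mathcal{A}$, which you justify by finiteness and injectivity, is already immediate from the definition $\tau_{\sigma,\delta}(C)=C$ used in this paper.
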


\begin{proof}
 Let $C= \sum_{i=0}^{2} \eta_i \mathcal{A}_i$ be a skew $(\sigma,\delta)$-constacyclic code of length $n$ over $\mathcal{R}$. Since $\delta$ is invariant under $\sigma$ and order of $\sigma$ is a factor of $n$, therefore, by Lemma $(3.1)$ of \cite{Jitman}, $C^{\perp}$ is a skew $(\sigma,\delta^{-1})$-constacyclic code over $\mathcal{R}$.
\end{proof}
\begin{corollary}
Let $C= \sum_{i=0}^{2} \eta_i \mathcal{A}_i$ be a skew $(\sigma,\delta)$-constacyclic code of length $n$ over $\mathcal{R}$ and $f_i(x)$ be the generator polynomial of skew $(\Theta,\lambda_i)$-constacyclic code $\mathcal{A}_i$ over $\mathbb{F}_{p^m}$ for $i=0,1,2$, respectively. Then there exists a polynomial $\ell(x) \in \mathcal{R}[x;\sigma]$ such that $C^{\perp}= \langle \ell(x) \rangle$ where $\ell(x)= \sum_{i=0}^{2} \eta_i h_i^{*}(x)$ and $h_i^{*}(x)$ is skew reciprocal polynomial of $h_i(x)$ where $f_i(x) h_i(x)= (x^n-\lambda_i)$ for $i=0,1,2$.
\end{corollary}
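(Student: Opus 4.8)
The plan is to reduce to the known field-level description of the Euclidean dual of a skew constacyclic code and then reassemble the three components through the Chinese Remainder decomposition, mirroring the proof of the generator-polynomial theorem above. By the preceding theorem we already have $C^{\perp}=\sum_{i=0}^{2}\eta_i\mathcal{A}_i^{\perp}$, where each $\mathcal{A}_i^{\perp}$ is a skew $(\Theta,\lambda_i^{-1})$-constacyclic code over $\mathbb{F}_{p^m}$; so it is enough to identify a generator polynomial of each $\mathcal{A}_i^{\perp}$ and then glue.

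For the field-level step, I would argue as follows. Since $\sigma$ fixes $\delta$, $\Theta$ fixes each $\lambda_i$, and since $\mathrm{ord}(\Theta)=\mathrm{ord}(\sigma)$ divides $n$, the polynomial $x^n-\lambda_i$ is central in $\mathbb{F}_{p^m}[x;\Theta]$. From $f_i(x)h_i(x)=x^n-\lambda_i$ and centrality, the domain property of $\mathbb{F}_{p^m}[x;\Theta]$ forces $h_i(x)f_i(x)=x^n-\lambda_i$ as well (multiply $x^n-\lambda_i$ on the left and on the right by $h_i$ and cancel $h_i$), so $f_i$ is in fact a right divisor of $x^n-\lambda_i$ with left cofactor $h_i$. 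Then the classical description of the dual of a skew constacyclic code over a finite field (as used in \cite{Jitman}) gives $\mathcal{A}_i^{\perp}=\langle h_i^{*}(x)\rangle$, where $h_i^{*}$ is the skew reciprocal polynomial of $h_i$. A short check reconciles this with the Remark: because $\sigma$ restricts to $\Theta$ on $\mathbb{F}_{p^m}$ and $h_i(x)\in\mathbb{F}_{p^m}[x;\Theta]$, the twist $\sigma^j$ acts as $\Theta^j$ on the coefficients of $h_i$, hence $h_i^{*}(x)=\sum_{j=0}^{k_i}\Theta^j(h_{i,k_i-j})x^j$ with $k_i=\deg h_i$. Applying Theorem \ref{thm field} to $\mathcal{A}_i^{\perp}$, the polynomial $h_i^{*}(x)$ is a right divisor of $x^n-\lambda_i^{-1}$.

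To conclude, I would set $\ell(x)=\sum_{i=0}^{2}\eta_i h_i^{*}(x)$ and repeat the gluing argument of the generator-polynomial theorem with $(f_i,\lambda_i)$ replaced by $(h_i^{*},\lambda_i^{-1})$. Since $\eta_i h_i^{*}(x)\in\eta_i\mathcal{A}_i^{\perp}\subseteq C^{\perp}$, we get $\langle\ell(x)\rangle\subseteq C^{\perp}$; conversely every element of $C^{\perp}=\sum_i\eta_i\mathcal{A}_i^{\perp}$ is $\sum_i\eta_i t_i(x)h_i^{*}(x)$ for some $t_i\in\mathbb{F}_{p^m}[x;\Theta]$, and since each $\eta_i$ is central in $\mathcal{R}[x;\sigma]$ (as $\sigma(\eta_i)=\eta_i$) with $\eta_i^2=\eta_i$ and $\eta_i\eta_j=0$ for $i\neq j$, this equals $\bigl(\sum_i\eta_i t_i(x)\bigr)\ell(x)\in\langle\ell(x)\rangle$; hence $C^{\perp}=\langle\ell(x)\rangle$. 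For right-divisibility, write $\delta^{-1}=\sum_i\eta_i\lambda_i^{-1}$ (valid since $\delta=\sum_i\eta_i\lambda_i$ and the $\eta_i$ are orthogonal idempotents summing to $1$) and $x^n-\lambda_i^{-1}=q_i(x)h_i^{*}(x)$ for suitable $q_i\in\mathbb{F}_{p^m}[x;\Theta]$; then $\bigl(\sum_i\eta_i q_i(x)\bigr)\ell(x)=\sum_i\eta_i(x^n-\lambda_i^{-1})=x^n-\delta^{-1}$, so $\ell(x)$ right-divides $x^n-\delta^{-1}$ in $\mathcal{R}[x;\sigma]$. The one delicate point is the field-level input in the second paragraph: one must be careful about conventions — left versus right factorization of $x^n-\lambda_i$, which quotient ring $\mathcal{A}_i^{\perp}$ lives in, and whether $h_i^{*}$ needs a unit scalar to match the Remark's normalization — but once that is pinned down, the remainder is exactly the orthogonal-idempotent bookkeeping already carried out for $C$ itself, so no genuine obstacle remains.
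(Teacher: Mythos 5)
Your proposal is correct and follows exactly the route the paper intends (the paper states this corollary without proof): decompose $C^{\perp}=\sum_{i}\eta_i\mathcal{A}_i^{\perp}$ via the preceding theorem, invoke the standard field-level fact that $\mathcal{A}_i^{\perp}=\langle h_i^{*}(x)\rangle$, and glue through the central orthogonal idempotents $\eta_i$ just as in the generator-polynomial theorem. Your extra care about the left/right order of the factorization $x^n-\lambda_i=h_if_i$ versus $f_ih_i$ (resolved by centrality and the domain property) and about the monic normalization of $h_i^{*}$ is warranted, since the paper itself is inconsistent on this point between the corollary's statement and Lemma~\ref{thm dual}.
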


\subsection{Gray image of skew constacyclic codes over the ring $\mathcal{R}$}
The aim of this subsection is to demonstrate the Gray image of skew $(\sigma,\delta)$-constacyclic code over $\mathcal{R}$. Towards this, we begin with the following definition.
\begin{definition}
	Let $C$ be a linear code of length $n=st$ (where $s,t$ are positive integer) over $\mathcal{R}$ and $\delta$ is unit in $\mathcal{R}$. Let $\Upsilon_{\sigma,t}: \mathcal{R}^n \longrightarrow \mathcal{R}^n$ be a linear operator defined by
	\begin{align*}
	\Upsilon_{\sigma,t}(a)=&~(a^1 \mid a^2 \mid \cdots\mid a^t)\\
	 =&~ (	\tau_{\sigma,\delta}(a^1)  \mid \tau_{\sigma,\delta}(a^2) \mid \cdots \mid \tau_{\sigma,\delta}(a^t))
 \end{align*}
 where $a^i \in \mathcal{R}^s$ for $i=1,2, \cdots, t$. Then $C$ is called a skew quasi twisted code of length $n$ and index $t$ if $\Upsilon_{\sigma,t}(C)=C$. If $\sigma$ is the identity automorphism, then $C$ is a quasi twisted code of length $n$ and index $t$ over $\mathcal{R}$.
\end{definition}
Next theorem easily follows from definition of skew quasi twisted code.
\begin{theorem}
	Let $C$ be a skew $(\sigma,\delta)$-constacyclic code of length $n$ over $\mathcal{R}$. Then $\psi(C)$ is a skew quasi twisted code of length $3n$ and index $3$ over $\mathbb{F}_{p^m}$.
\end{theorem}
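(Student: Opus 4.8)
The plan is to show directly that the Gray map $\psi$ intertwines the skew constacyclic shift $\tau_{\sigma,\delta}$ on $\mathcal{R}^n$ with the operator $\Upsilon_{\Theta,n}$ on $\mathbb{F}_{p^m}^{3n}$ associated to index $3$ and block length $n$. Concretely, write a codeword $c=(c_0,c_1,\dots,c_{n-1})\in\mathcal{R}^n$ with each $c_j=\beta_{0,j}\eta_0+\beta_{1,j}\eta_1+\beta_{2,j}\eta_2$, so that $\boldsymbol{c_j}=(\beta_{0,j},\beta_{1,j},\beta_{2,j})$ and $\psi(c_j)=\boldsymbol{c_j}M$. First I would record that, because $\psi$ is applied componentwise, $\psi(c)$ is the concatenation $(\psi(c_0)\mid\psi(c_1)\mid\cdots\mid\psi(c_{n-1}))$, and then re-arrange (it is purely notational) this length-$3n$ vector into the three interleaved blocks $a^1,a^2,a^3\in\mathbb{F}_{p^m}^n$ on which $\Upsilon_{\Theta,3}$ acts; this is the identification under which the claim is to be read.

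Next I would compute $\psi(\tau_{\sigma,\delta}(c))$. By definition $\tau_{\sigma,\delta}(c)=(\sigma(\delta c_{n-1}),\sigma(c_0),\dots,\sigma(c_{n-2}))$, and using the ring decomposition together with the facts $\sigma(\eta_i)=\eta_i$ and $\delta\eta_i=\lambda_i\eta_i$, one gets $\sigma(\delta c_{n-1})=\sum_{i}\lambda_i\Theta(\beta_{i,n-1})\eta_i$ and $\sigma(c_j)=\sum_i\Theta(\beta_{i,j})\eta_i$. Hence, coordinate $i$ of the underlying vector of each entry of $\tau_{\sigma,\delta}(c)$ is obtained from coordinate $i$ of $c$ by the classical skew $(\Theta,\lambda_i)$-constacyclic shift on $\mathbb{F}_{p^m}^n$. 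Since the Gray map acts by the fixed invertible matrix $M$ on these coordinate vectors and the block-reindexing is a linear bijection independent of the codeword, this shows $\psi\circ\tau_{\sigma,\delta}=\Upsilon_{\Theta,3}\circ\psi$ on $\mathcal{R}^n$ (after the reindexing). The only mild subtlety is to be careful that $M$ mixes the three coordinates, so $\Upsilon$ must be taken with respect to the interleaved blocks coming \emph{after} applying $M$ entrywise; since each entry's image under $M$ still depends only on the corresponding entry of $c$, the three resulting length-$n$ strings are each skew-cyclically permuted (with the appropriate unit), which is exactly the index-$3$ skew quasi-twisted structure.

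Finally, since $C$ is skew $(\sigma,\delta)$-constacyclic, $\tau_{\sigma,\delta}(C)=C$, so $\psi(C)=\psi(\tau_{\sigma,\delta}(C))=\Upsilon_{\Theta,3}(\psi(C))$; because $\psi$ is injective and $\psi(C)$ is linear (Gray map is linear), $\psi(C)$ is invariant under $\Upsilon_{\Theta,3}$, i.e.\ it is a skew quasi-twisted code of length $3n$ and index $3$ over $\mathbb{F}_{p^m}$. I expect the main obstacle to be purely bookkeeping: pinning down the correct coordinate permutation that turns the concatenation $(\psi(c_0)\mid\cdots\mid\psi(c_{n-1}))$ into the three index-$3$ blocks, and verifying that the unit appearing in each block's shift is consistently $\lambda_i$ (equivalently, that reordering by $M$ does not disturb the block structure) — once that identification is fixed, the intertwining identity is a one-line consequence of the computation above.
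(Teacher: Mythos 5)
The paper offers no argument for this theorem beyond ``follows from the definition,'' so your write-up is the only detailed proof on the table; unfortunately the step you yourself flag as a ``mild subtlety'' is exactly where the argument breaks. Your reduction is correct up to the observation that the $i$-th coordinate string $\beta_i=(\beta_{i,0},\dots,\beta_{i,n-1})$ of $c$ is carried by $\tau_{\sigma,\delta}$ to the skew $(\Theta,\lambda_i)$-constacyclic shift $\tau_{\Theta,\lambda_i}(\beta_i)$. But the $k$-th block of $\psi(c)$ (after your reindexing) is not $\beta_k$; it is the mixture $b^k=\sum_{i}M_{ik}\beta_i$, and the corresponding block of $\psi(\tau_{\sigma,\delta}(c))$ is $\sum_i M_{ik}\,\tau_{\Theta,\lambda_i}(\beta_i)$. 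For this to equal $\tau_{\Theta,\mu}(b^k)=\sum_i\Theta(M_{ik})\,\tau_{\Theta,\mu}(\beta_i)$ for a single unit $\mu$ --- which is what the paper's definition of $\Upsilon$ requires, since it applies one fixed shift to every block --- you need every $\lambda_i$ with $M_{ik}\neq 0$ to equal $\mu$ \emph{and} the entries of $M$ to be fixed by $\Theta$ (the shift is only $\Theta$-semilinear, not linear). Neither holds in the situations the paper actually uses: $\delta=1-2v^2$ gives $(\lambda_0,\lambda_1,\lambda_2)=(1,-1,-1)$, and the matrix used for $\mathbb{F}_{3^2}$ has entries $t,t^2$ not fixed by the Frobenius. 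So the asserted intertwining $\psi\circ\tau_{\sigma,\delta}=\Upsilon_{\Theta,3}\circ\psi$ fails in general, and the sentence ``the three resulting length-$n$ strings are each skew-cyclically permuted (with the appropriate unit)'' is the gap: a linear combination of strings shifted with \emph{different} constacyclic units is not a constacyclic shift of the combination.

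What survives without extra hypotheses is only that $\psi(C)$ is invariant under the conjugated operator $\psi\circ\tau_{\sigma,\delta}\circ\psi^{-1}$, which is immediate and does not even need injectivity of $\psi$, since $\psi(\tau_{\sigma,\delta}(C))=\psi(C)$ already as sets. To land in a genuine quasi-twisted structure you must either (a) require $M$ to be a monomial matrix with entries in the fixed field of $\Theta$, so each post-$M$ block is a scalar multiple of a single $\beta_i$ and is honestly $\tau_{\Theta,\lambda_{\pi(k)}}$-invariant, or (b) drop the mixing and use the block decomposition $(\beta_0\mid\beta_1\mid\beta_2)$, in which block $i$ is shifted by $\tau_{\Theta,\lambda_i}$ --- i.e.\ a \emph{generalized} quasi-twisted code whose blocks carry possibly different shift constants, which is not what the paper's operator $\Upsilon_{\sigma,t}$ defines. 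You should state which reading you intend and add the corresponding hypothesis; as written, the key identity on which your proof rests does not hold.
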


\section{Quantum codes from skew constacyclic codes over the ring $\mathcal{R}$}

In this section, we construct several new non-binary quantum codes over finite field $\mathbb{F}_{p^m}$ with the help of dual containing skew constacyclic codes over the ring $\mathcal{R}$. For a prime $p$ and positive integer $m$, a $p^m$-ary quantum code $Q$ of length $n$ is a $p^{mk}$ dimensional subspace of $p^{mn}$ dimensional complex Hilbert space $(\mathbb{C}^{p^m})^{\otimes n}=\underbrace{\mathbb{C}^{p^m} \otimes \mathbb{C}^{p^m} \dots \otimes \mathbb{C}^{p^m}}_{n-times} $ and rigorously represented by $[[n,k,d]]_{p^m}$ where $d$ is the minimum distance of $Q$. It can correct both types of errors i.e., bit flip and phase shift errors up to $\lfloor \frac{d-1}{2}\rfloor$.\\
In $1996$, Calderbank et al. \cite{Calderbank} gave a method for the construction of binary quantum codes from classical linear codes. Further, E.M. Rains \cite{Rains} and Ketkar et al. \cite{Ketkar} generalized for non binary case and constructed stabilizer quantum codes from classical linear codes over $\mathbb{F}_{p^m}$.\\
Now, we recall the well known result known as CSS construction (Lemma \ref{lemma css}) which plays a vital role in the construction of quantum codes.

\begin{lemma}[\cite{Grassl04}, Theorem 3] \label{lemma css}
	If $C_{1}=[n,k_{1},d_{1}]_{p^m}$ and $C_{2}=[n,k_{2},d_{2}]_{p^m}$ are two linear codes over $GF(p^m)$ such that $C_{2}^{\perp}\subseteq C_{1}$, then there exists a QECC with parameters $[[n,k_{1}+k_{2}-n,d]]$ where $d=min\{w(v): v\in (C_{1}\backslash C_{2}^{\perp})\cup (C_{2}\backslash C_{1}^{\perp})\}\geq min\{ d_{1},d_{2}\}$. Moreover, if $C_{1}^{\perp}\subseteq C_{1},$ then there exists a quantum code $C$ with parameters $[[n,2k_{1}-n,d_{1}]]$, where $d_{1}=min\{w(v):v\in C_{1} \backslash C_{1}^{\perp}\}$.
\end{lemma}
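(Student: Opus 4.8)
The plan is to realize the asserted quantum codes as stabilizer codes and to invoke the standard dictionary between $q$-ary stabilizer codes (with $q=p^m$) and classical $\mathbb{F}_q$-linear codes that are self-orthogonal under the trace-symplectic form. Concretely, recall that an $\mathbb{F}_q$-linear code $\mathcal{C}\subseteq \mathbb{F}_q^{2n}$ with $|\mathcal{C}|=q^{n-k}$ that is self-orthogonal with respect to the symplectic inner product $\langle (a\mid b),(a'\mid b')\rangle_s = a\cdot b' - a'\cdot b$ gives rise to an $[[n,k,d]]_q$ quantum code, where $d$ is the minimum symplectic weight of $\mathcal{C}^{\perp_s}\setminus\mathcal{C}$, $\mathcal{C}^{\perp_s}$ being the symplectic dual. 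Granting this, the proof becomes a purely linear-algebraic verification.

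First I would set $\mathcal{C}:=C_1^{\perp}\times C_2^{\perp}=\{(a\mid b) : a\in C_1^{\perp},\ b\in C_2^{\perp}\}$. Using $C_2^{\perp}\subseteq C_1$ (equivalently $C_1^{\perp}\subseteq C_2$, by taking Euclidean duals), for any $(a\mid b),(a'\mid b')\in\mathcal{C}$ one has $a\cdot b'=0$ since $a\in C_1^{\perp}\subseteq C_2$ and $b'\in C_2^{\perp}$, and likewise $a'\cdot b=0$; hence $\langle(a\mid b),(a'\mid b')\rangle_s=0$, so $\mathcal{C}$ is symplectic self-orthogonal. Counting gives $|\mathcal{C}|=q^{\,n-k_1}q^{\,n-k_2}=q^{\,2n-k_1-k_2}$, so the associated quantum code has dimension parameter $k=2n-(2n-k_1-k_2)=k_1+k_2-n$, as claimed.

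Next I would compute $\mathcal{C}^{\perp_s}$. A direct computation with the symplectic form shows $(A\times B)^{\perp_s}=B^{\perp}\times A^{\perp}$ for the ordinary Euclidean duals, so $\mathcal{C}^{\perp_s}=(C_2^{\perp})^{\perp}\times(C_1^{\perp})^{\perp}=C_2\times C_1$. Hence $d$ equals the minimum symplectic weight over $(C_2\times C_1)\setminus(C_1^{\perp}\times C_2^{\perp})$. For the upper bound: if $v\in C_1\setminus C_2^{\perp}$ then $(0\mid v)$ lies in this set and has symplectic weight $w(v)$, and symmetrically $(u\mid 0)$ works for $u\in C_2\setminus C_1^{\perp}$, so $d\le \min\{w(x) : x\in(C_1\setminus C_2^{\perp})\cup(C_2\setminus C_1^{\perp})\}$. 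For the lower bound: if $(u\mid v)\in(C_2\times C_1)\setminus(C_1^{\perp}\times C_2^{\perp})$ then $u\notin C_1^{\perp}$ or $v\notin C_2^{\perp}$; in the latter case $v\in C_1\setminus C_2^{\perp}$ and the symplectic weight of $(u\mid v)$ is at least $w(v)$, and symmetrically in the former case, yielding the reverse inequality. Since $C_1\setminus C_2^{\perp}\subseteq C_1\setminus\{0\}$ and $C_2\setminus C_1^{\perp}\subseteq C_2\setminus\{0\}$, this bound is $\ge\min\{d_1,d_2\}$. Finally, the ``moreover'' statement is the special case $C_2=C_1$: the hypothesis becomes $C_1^{\perp}\subseteq C_1$, and the formulas specialize to $k=2k_1-n$ and $d=\min\{w(x):x\in C_1\setminus C_1^{\perp}\}$.

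The main obstacle is the very first step: rigorously establishing (or correctly citing) the stabilizer-code / symplectic-self-orthogonality correspondence together with the exact distance characterization in terms of $\mathcal{C}^{\perp_s}\setminus\mathcal{C}$, which rests on the Knill--Laflamme error-detection conditions and the structure of the $q$-ary error group, and is precisely the content of the cited references. Once that machinery is available, the remaining work is elementary; the only points needing care are the ordering of the factors in the symplectic dual and the repeated use of the equivalence $C_2^{\perp}\subseteq C_1\iff C_1^{\perp}\subseteq C_2$.
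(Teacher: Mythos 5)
Your proposal is correct. Note that the paper itself gives no proof of this lemma: it is imported verbatim as Theorem~3 of the cited Grassl--Beth reference, so there is nothing in the paper to compare against. Your derivation is the standard one for the CSS construction: the code $\mathcal{C}=C_1^{\perp}\times C_2^{\perp}$ is indeed symplectic self-orthogonal precisely because $C_2^{\perp}\subseteq C_1$ forces $C_1^{\perp}\subseteq C_2$, the cardinality count $|\mathcal{C}|=q^{2n-k_1-k_2}$ gives $k=k_1+k_2-n$, the identity $(A\times B)^{\perp_s}=B^{\perp}\times A^{\perp}$ yields $\mathcal{C}^{\perp_s}=C_2\times C_1$, and your two-sided argument correctly identifies the minimum symplectic weight of $\mathcal{C}^{\perp_s}\setminus\mathcal{C}$ with $\min\{w(v):v\in(C_1\setminus C_2^{\perp})\cup(C_2\setminus C_1^{\perp})\}$, with the ``moreover'' clause as the specialization $C_2=C_1$. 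The only non-elementary ingredient, the correspondence between symplectic self-orthogonal $\mathbb{F}_q$-linear codes and $q$-ary stabilizer codes with the $\mathcal{C}^{\perp_s}\setminus\mathcal{C}$ distance characterization, you correctly flag as the content of the cited references (Rains, Ketkar et al.), which is exactly where the paper also offloads it.
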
{}

To construct quantum codes over $\mathbb{F}_{p^m}$ from skew constacyclic codes over the ring $\mathcal{R}$, we obtain a necessary and sufficient condition for skew constacyclic codes over the ring $\mathcal{R}$ to contain their duals. The following result follows the similar argument of Theorem 5.4 and Theorem 5.5 of \cite{bag}.

	\begin{lemma}\label{thm dual} Let $\mathcal{A}_i$ be a skew $(\Theta, \lambda_i)$-constacyclic code of length $n$ over $\mathbb{F}_{p^m}$ with generator polynomials $f_i(x)$ for $i=0,1,2$. Then $\mathcal{A}_i$ contains its dual if and only if $h_i^{*}(x) h_i(x)$ is right divisible by $(x^n-\lambda_i)$, where $(x^n-\lambda_i)=h_i(x) f_i(x)$, $h_i^{*}(x)$ is the skew reciprocal polynomial of $h_i(x)$ and $\lambda_i =\pm 1$ for $i=0,1,2$.
\end{lemma}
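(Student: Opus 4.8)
The plan is to work inside the skew polynomial ring $\mathbb{F}_{p^m}[x;\Theta]$ and to translate the containment $\mathcal{A}_i^{\perp} \subseteq \mathcal{A}_i$ into a right-divisibility statement about the generator polynomials. Recall from Theorem \ref{thm field} that $\mathcal{A}_i = \langle f_i(x) \rangle$ with $f_i(x)$ a right divisor of $x^n - \lambda_i$, say $x^n - \lambda_i = h_i(x) f_i(x)$; and from Corollary (the dual corollary above) that $\mathcal{A}_i^{\perp}$ is generated by the skew reciprocal polynomial $h_i^{*}(x)$, which is itself a right divisor of $x^n - \lambda_i^{-1} = x^n - \lambda_i$ (using $\lambda_i = \pm 1$). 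So both $\mathcal{A}_i$ and $\mathcal{A}_i^{\perp}$ are principal left ideals in the module $\mathbb{F}_{p^m}[x;\Theta]/\langle x^n - \lambda_i\rangle$, generated by right divisors of $x^n - \lambda_i$.

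The key observation I would use is the standard order/divisibility dictionary for skew constacyclic codes: for two right divisors $g_1, g_2$ of $x^n - \lambda_i$, one has $\langle g_1 \rangle \subseteq \langle g_2 \rangle$ (as left ideals / submodules of the quotient) if and only if $g_2$ right-divides $g_1$ in $\mathbb{F}_{p^m}[x;\Theta]$. Applying this with $g_1 = h_i^{*}(x)$ (generator of $\mathcal{A}_i^{\perp}$) and $g_2 = f_i(x)$ (generator of $\mathcal{A}_i$), the containment $\mathcal{A}_i^{\perp} \subseteq \mathcal{A}_i$ is equivalent to $f_i(x)$ right-dividing $h_i^{*}(x)$. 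Writing $h_i^{*}(x) = k_i(x) f_i(x)$ for some $k_i(x)$, I would then multiply on the right by $h_i(x)$ and use $h_i(x) f_i(x) = x^n - \lambda_i$ to get $h_i^{*}(x) h_i(x) = k_i(x)(x^n - \lambda_i)$, i.e. $(x^n - \lambda_i)$ right-divides $h_i^{*}(x) h_i(x)$. Conversely, if $(x^n - \lambda_i) \mid_r h_i^{*}(x) h_i(x)$, write $h_i^{*}(x) h_i(x) = k_i(x)(x^n-\lambda_i) = k_i(x) h_i(x) f_i(x)$; since $\mathbb{F}_{p^m}[x;\Theta]$ has a right division algorithm (it is a left and right Euclidean domain, $\Theta$ being an automorphism of a field) and $h_i(x)$ is monic, right cancellation of $h_i(x)$ is valid, yielding $h_i^{*}(x) = k_i(x) f_i(x)$, hence $f_i(x) \mid_r h_i^{*}(x)$ and $\mathcal{A}_i^{\perp} \subseteq \mathcal{A}_i$. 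This closes the equivalence.

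The step that needs the most care is the right-cancellation of $h_i(x)$ and, more subtly, the verification that the skew reciprocal operation behaves well with respect to the factorization $x^n - \lambda_i = h_i(x) f_i(x)$ — specifically that $h_i^{*}(x)$ is indeed a right divisor of $x^n - \lambda_i$ and that its ``complementary'' factor is (a unit scalar times) $f_i^{*}(x)$; this relies on the identity $(a(x)b(x))^{*} = $ (up to a power of $x$ and a unit) $\, b^{*}(x) a^{*}(x)$ together with $\sigma$ fixing $\lambda_i$ and $\Theta$ having order dividing $n$, which are our standing hypotheses. Since the statement explicitly says the argument follows Theorems 5.4 and 5.5 of \cite{bag}, I would invoke those reciprocal-polynomial identities rather than re-deriving them, and focus the written proof on the divisibility translation above. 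The restriction $\lambda_i = \pm 1$ is exactly what guarantees $\lambda_i^{-1} = \lambda_i$, so that $\mathcal{A}_i$ and $\mathcal{A}_i^{\perp}$ live in the same ambient module $\mathbb{F}_{p^m}[x;\Theta]/\langle x^n - \lambda_i\rangle$ and the divisor comparison makes sense.
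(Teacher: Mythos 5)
Your overall strategy is the right one, and it is essentially the argument the paper itself delegates to Theorems 5.4 and 5.5 of \cite{bag} (the paper offers no proof of its own for this lemma beyond that citation). The dictionary ``$\langle g_1\rangle \subseteq \langle g_2\rangle$ if and only if $g_2$ right-divides $g_1$'' for right divisors of $x^n-\lambda_i$, and the reduction of $\mathcal{A}_i^{\perp}\subseteq\mathcal{A}_i$ to $f_i(x)$ right-dividing $h_i^{*}(x)$, are both correct, as is your remark that $\lambda_i=\pm1$ is what keeps $\mathcal{A}_i^{\perp}$ in the same ambient quotient.

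There is, however, one concrete step that does not work as written: in both directions you silently commute $f_i(x)$ past $h_i(x)$. In the forward direction, from $h_i^{*}=k_i f_i$ you obtain $h_i^{*}h_i=k_i f_i h_i$, and to identify this with $k_i(x^n-\lambda_i)$ you need $f_i h_i=x^n-\lambda_i$, whereas the hypothesis is $h_i f_i=x^n-\lambda_i$; in a noncommutative skew polynomial ring these are not a priori equal. In the converse direction the problem is sharper: from $h_i^{*}h_i=k_i h_i f_i$ you cannot ``right-cancel $h_i$,'' because on the right-hand side $h_i$ sits in the middle of the product, not at its right end. Both issues are repaired by the same standard observation, which should be stated explicitly: since $\Theta$ fixes $\lambda_i$ and the order of $\Theta$ divides $n$, the polynomial $x^n-\lambda_i$ is central, so $(h_i f_i)f_i=(x^n-\lambda_i)f_i=f_i(x^n-\lambda_i)=f_i(h_i f_i)$, and right-cancelling $f_i$ (the skew polynomial ring over a field is a domain) gives $h_i f_i=f_i h_i=x^n-\lambda_i$. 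With this commutation in hand, the forward direction reads $h_i^{*}h_i=k_i f_i h_i=k_i(x^n-\lambda_i)$, and the converse reads $h_i^{*}h_i=k_i h_i f_i=(k_i f_i)h_i$, after which right cancellation of $h_i$ is legitimate and yields $h_i^{*}=k_i f_i$. Once that auxiliary fact is inserted, your proof is complete.
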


\begin{theorem}
	Let $C= \sum_{i=0}^{2} \eta_i \mathcal{A}_i$  be a skew $(\sigma,\delta)$-constacyclic code of length $n$ over $\mathcal{R}$ with generator polynomial $f(x)=\sum_{i=0}^{2} \eta_i f_i(x)$, where $f_i(x)$ is the generator polynomial of skew $(\Theta,\lambda_i)$-constacyclic code $ \mathcal{A}_i$ over $\mathbb{F}_{p^m}$ for $i=0,1,2$ with $\lambda_{i}=\pm 1$. Then $C^{\perp} \subseteq C$ if and only if $(x^n-\lambda_i)$ right divides $h_i^{*}(x) h_i(x)$ for all $i=0,1,2$.
	Here, $(x^n-\lambda_i)=h_i(x) f_i(x)$ and  $h_i^{*}(x)$ represents skew reciprocal of $h_{i}(x)$ for $i=0,1,2$.
\end{theorem}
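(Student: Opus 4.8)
The plan is to reduce the claim over $\mathcal{R}$ to the three constituent codes over $\mathbb{F}_{p^m}$ and then apply Lemma~\ref{thm dual}. First I would invoke the structure theorem for Euclidean duals proved above: since $\delta$ is fixed by $\sigma$ and the order of $\sigma$ divides $n$, we have $C^{\perp}=\sum_{i=0}^{2}\eta_i\mathcal{A}_i^{\perp}$, where each $\mathcal{A}_i^{\perp}$ is a skew $(\Theta,\lambda_i^{-1})$-constacyclic code over $\mathbb{F}_{p^m}$. Because $\lambda_i=\pm 1$ we have $\lambda_i^{-1}=\lambda_i$, so $C^{\perp}$ is again a skew $(\sigma,\delta)$-constacyclic code and the containment $C^{\perp}\subseteq C$ is between codes of the same type.

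The main step is the equivalence
\[
C^{\perp}\subseteq C \quad\Longleftrightarrow\quad \mathcal{A}_i^{\perp}\subseteq\mathcal{A}_i \ \text{ for each } i=0,1,2 .
\]
The implication $(\Leftarrow)$ is immediate from the decompositions: $C^{\perp}=\sum_i\eta_i\mathcal{A}_i^{\perp}\subseteq\sum_i\eta_i\mathcal{A}_i=C$. For $(\Rightarrow)$ I would use the orthogonal idempotents $\eta_0,\eta_1,\eta_2$. Fix $i$ and let $x\in\mathcal{A}_i^{\perp}$. Then $\eta_i x\in\eta_i\mathcal{A}_i^{\perp}\subseteq C^{\perp}\subseteq C=\sum_{j=0}^{2}\eta_j\mathcal{A}_j$, so $\eta_i x=\sum_{j=0}^{2}\eta_j y_j$ with $y_j\in\mathcal{A}_j$. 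Multiplying both sides by $\eta_i$ and using $\eta_i\eta_j=\eta_i$ when $i=j$ and $0$ otherwise, we get $\eta_i x=\eta_i y_i$; by the uniqueness of the decomposition $\mathcal{R}^n\cong\bigoplus_{j=0}^{2}\eta_j\mathbb{F}_{p^m}^{n}$ noted earlier, this forces $x=y_i\in\mathcal{A}_i$. Hence $\mathcal{A}_i^{\perp}\subseteq\mathcal{A}_i$.

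To finish, Lemma~\ref{thm dual} characterises the component condition: for each $i$, writing $(x^n-\lambda_i)=h_i(x)f_i(x)$, we have $\mathcal{A}_i^{\perp}\subseteq\mathcal{A}_i$ if and only if $(x^n-\lambda_i)$ right divides $h_i^{*}(x)h_i(x)$. Combining this with the equivalence of the previous paragraph yields the theorem. I expect the only delicate point to be the bookkeeping in $(\Rightarrow)$: one must carry out the idempotent multiplication inside the module $\mathcal{R}^n$ rather than the ring $\mathcal{R}$, and apply the CRT uniqueness coordinate-wise; the remaining steps are direct substitutions into results already established.
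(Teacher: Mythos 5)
Your proposal is correct and follows essentially the same route as the paper: reduce $C^{\perp}\subseteq C$ to the componentwise containments $\mathcal{A}_i^{\perp}\subseteq\mathcal{A}_i$ via the idempotent decomposition, then apply Lemma~\ref{thm dual}. Your explicit idempotent-multiplication argument for the forward direction is a careful spelling-out of what the paper compresses into ``taking modulo $\eta_i$.''
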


\begin{proof}
	Let $C= \sum_{i=0}^{2} \eta_i \mathcal{A}_i$  be a skew $(\sigma,\delta)$-constacyclic code of length $n$ over  $\mathcal{R}$ and ${C}^{\perp} \subseteq {C}$. Then $\sum_{i=0}^{2} \eta_i \mathcal{A}_i^{\perp}\subseteq \sum_{i=0}^{2} \eta_i \mathcal{A}_i$. Since $\eta_i$ is primitive orthogonal idempotent in $\mathcal{R}$, thus by taking modulo $\eta_i$ we get $\mathcal{A}_i^{\perp}\subseteq \mathcal{A}_i$ for $i=0,1,2$. Therefore, by Lemma \ref{thm dual}  $(x^n-\lambda_i)$ right divides $h_i^{*}(x) h_i(x)$ for $i=0,1,2$.\\
	Conversely, let $h_i^{*}(x) h_i(x)$ be right divisible by $(x^n-\lambda_i)$ for $i=0,1,2$. Then, by Lemma \ref{thm dual}, we get $\mathcal{A}_i^{\perp}\subseteq \mathcal{A}_i$ for $i=0,1,2$. Therefore, $\sum_{i=0}^{2} \eta_i \mathcal{A}_i^{\perp}\subseteq \sum_{i=0}^{2} \eta_i \mathcal{A}_i$. Consequently, ${C}^{\perp} \subseteq {C}$.
	\end{proof}
\begin{corollary}
Let $C= \sum_{i=0}^{2} \eta_i \mathcal{A}_i$  be a skew $(\sigma,\delta)$-constacyclic code of length $n$ over $\mathcal{R}$. Then ${C}^{\perp} \subseteq {C}$ if and only if $\mathcal{A}_i^{\perp} \subseteq \mathcal{A}_i$ for all $i=0,1,2$.
	\end{corollary}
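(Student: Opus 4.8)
The plan is to read this off from the idempotent decomposition that has already been set up, since the statement is essentially the intermediate step isolated inside the proof of the preceding theorem. The key facts I would use are that $C^{\perp} = \sum_{i=0}^{2} \eta_i \mathcal{A}_i^{\perp}$ and that $\eta_0,\eta_1,\eta_2$ form a complete set of pairwise orthogonal idempotents ($\eta_i\eta_j = 0$ for $i\neq j$, $\eta_i^2=\eta_i$, $\sum_i \eta_i = 1$), so that both $C$ and $C^{\perp}$ respect the same direct-sum decomposition $\mathcal{R}^n \cong \bigoplus_i \eta_i\mathcal{R}^n$, with the $i$-th summand $\mathbb{F}_{p^m}$-linearly isomorphic to $\mathbb{F}_{p^m}^n$ via $x \mapsto \eta_i x$.

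For the forward direction I would assume $C^{\perp}\subseteq C$, that is $\sum_i \eta_i\mathcal{A}_i^{\perp} \subseteq \sum_i \eta_i\mathcal{A}_i$, fix $j\in\{0,1,2\}$, and multiply both sides componentwise by $\eta_j$; orthogonality collapses the left side to $\eta_j\mathcal{A}_j^{\perp}$ and the right side to $\eta_j\mathcal{A}_j$, and transporting along the isomorphism above gives $\mathcal{A}_j^{\perp}\subseteq \mathcal{A}_j$. Since $j$ is arbitrary, this settles the ``only if'' part. For the converse, from $\mathcal{A}_i^{\perp}\subseteq \mathcal{A}_i$ for each $i$ I would multiply by $\eta_i$ and sum over $i=0,1,2$ to recover $C^{\perp}=\sum_i \eta_i\mathcal{A}_i^{\perp} \subseteq \sum_i \eta_i\mathcal{A}_i = C$.

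A second, even shorter route I would mention is to compose known equivalences: the preceding theorem tells us that $C^{\perp}\subseteq C$ holds exactly when $(x^n-\lambda_i)$ right-divides $h_i^{*}(x)h_i(x)$ for all $i$, while Lemma~\ref{thm dual} tells us that, for each fixed $i$, this same divisibility is equivalent to $\mathcal{A}_i^{\perp}\subseteq \mathcal{A}_i$; chaining the two equivalences yields the corollary. I do not expect any genuine obstacle here. The one point worth stating carefully is why ``reducing modulo $\eta_i$'' is legitimate, namely that the projection $\mathcal{R}^n \to \eta_i\mathcal{R}^n$ is a well-defined $\mathcal{R}$-module map carrying the inclusion $C^{\perp}\subseteq C$ to the inclusion of $i$-th components, and this is exactly what orthogonality of the idempotents delivers.
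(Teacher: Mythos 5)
Your proposal is correct and matches the paper's treatment: the corollary is exactly the intermediate step carried out inside the proof of the preceding theorem, where $C^{\perp}=\sum_{i}\eta_i\mathcal{A}_i^{\perp}$ is compared with $\sum_i\eta_i\mathcal{A}_i$ and the orthogonal idempotents are used to project onto each component. Your justification of why multiplying by $\eta_j$ isolates $\mathcal{A}_j^{\perp}\subseteq\mathcal{A}_j$ is, if anything, more careful than the paper's one-line ``taking modulo $\eta_i$''.
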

\begin{theorem}\label{Thm quant}
	Let $C$ be a skew $(\sigma,\delta)$-constacyclic code of length $n$ over $\mathcal{R}$ with Gray image $\psi(C)$ which has parameters $[3n,k,d_{G}]$ where $d_{G}$ is the minimum Gray distance of $C$. If $C^{\perp}\subseteq C$, then there exists a quantum code with parameters $[[3n,2k-3n,d_{G}]]$ over $\mathbb{F}_{p^m}$.
\end{theorem}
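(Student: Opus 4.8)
The plan is to push everything over to the Gray image $\psi(C)$ and then invoke the CSS construction of Lemma~\ref{lemma css}; the one substantive point is that the hypothesis $C^{\perp}\subseteq C$ forces $\psi(C)^{\perp}\subseteq\psi(C)$, after which the quantum code is produced mechanically.

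Since $\psi\colon\mathcal{R}^n\to\mathbb{F}_{p^m}^{3n}$ is an $\mathbb{F}_{p^m}$-linear bijection and $\psi(C)$ is a $[3n,k,d_G]$ code over $\mathbb{F}_{p^m}$ by hypothesis (consistent with Lemma~\ref{cor k}), we have $\lvert C\rvert=\lvert\psi(C)\rvert=p^{mk}$, hence $\lvert C^{\perp}\rvert=\lvert\psi(C^{\perp})\rvert=p^{m(3n-k)}$ and also $\lvert\psi(C)^{\perp}\rvert=p^{m(3n-k)}$. I would next show $\psi(C^{\perp})=\psi(C)^{\perp}$. Write $u=\eta_0\boldsymbol{u}^{(0)}+\eta_1\boldsymbol{u}^{(1)}+\eta_2\boldsymbol{u}^{(2)}$ and $w=\eta_0\boldsymbol{w}^{(0)}+\eta_1\boldsymbol{w}^{(1)}+\eta_2\boldsymbol{w}^{(2)}$ with $\boldsymbol{u}^{(i)},\boldsymbol{w}^{(i)}\in\mathbb{F}_{p^m}^n$. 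Because $\eta_i\eta_j$ is $\eta_i$ when $i=j$ and $0$ otherwise, $u\cdot w=\sum_{i=0}^{2}\eta_i\bigl(\boldsymbol{u}^{(i)}\cdot\boldsymbol{w}^{(i)}\bigr)$, so $u\cdot w=0$ in $\mathcal{R}$ exactly when $\boldsymbol{u}^{(i)}\cdot\boldsymbol{w}^{(i)}=0$ in $\mathbb{F}_{p^m}$ for every $i$. On the Gray side, working coordinate by coordinate, $\psi(u)\cdot\psi(w)=\sum_{j=0}^{n-1}(\boldsymbol{u}_jM)(\boldsymbol{w}_jM)^{\mathsf T}=\sum_{j=0}^{n-1}\boldsymbol{u}_j\,(MM^{\mathsf T})\,\boldsymbol{w}_j^{\mathsf T}$, where $\boldsymbol{u}_j=(u^{(0)}_j,u^{(1)}_j,u^{(2)}_j)$; using that $M$ satisfies $MM^{\mathsf T}=cI_3$ for a scalar $c\in\mathbb{F}_{p^m}^{*}$ (the relevant feature of the Gray map of \cite{Ma}), this equals $c\sum_{i=0}^{2}\boldsymbol{u}^{(i)}\cdot\boldsymbol{w}^{(i)}$. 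Hence $u\in C^{\perp}$ implies $\psi(u)\cdot\psi(w)=0$ for all $w\in C$, i.e.\ $\psi(C^{\perp})\subseteq\psi(C)^{\perp}$, and the cardinality count above upgrades this to the equality $\psi(C^{\perp})=\psi(C)^{\perp}$. Combining with $C^{\perp}\subseteq C$ gives $\psi(C)^{\perp}=\psi(C^{\perp})\subseteq\psi(C)$.

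With $\psi(C)^{\perp}\subseteq\psi(C)$ established, I would apply the second assertion of Lemma~\ref{lemma css} to the code $C_1=\psi(C)$ over $\mathbb{F}_{p^m}$, with $k_1=k$ and ambient length $3n$: there exists a quantum code with parameters $[[3n,\,2k-3n,\,d_1]]$, where $d_1=\min\{w(v):v\in\psi(C)\setminus\psi(C)^{\perp}\}$. Since $\psi(C)^{\perp}\subseteq\psi(C)$, every element of $\psi(C)\setminus\psi(C)^{\perp}$ is a nonzero codeword of $\psi(C)$, so $d_1\ge d(\psi(C))=d_G$; this yields the asserted $[[3n,2k-3n,d_G]]$ quantum code over $\mathbb{F}_{p^m}$.

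The step I expect to be the main obstacle is the duality-preservation claim $\psi(C^{\perp})=\psi(C)^{\perp}$: it rests on $\psi$ intertwining the Euclidean inner products on $\mathcal{R}^n$ and $\mathbb{F}_{p^m}^{3n}$, which amounts to $M$ being a scalar multiple of an orthogonal matrix; once that is in place, the remaining ingredients — the idempotent decomposition of the inner product and the CSS lemma — are routine.
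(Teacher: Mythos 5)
Your proof is correct and follows the route the paper intends: the paper actually states Theorem~\ref{Thm quant} with no proof at all, treating it as an immediate consequence of the CSS construction (Lemma~\ref{lemma css}) applied to the Gray image. Your argument supplies exactly that, and usefully makes explicit the one ingredient the paper leaves tacit, namely that $\psi(C^{\perp})=\psi(C)^{\perp}$ requires the Gray-map matrix to satisfy $MM^{\mathsf T}=cI_3$ with $c\in\mathbb{F}_{p^m}^{*}$ --- a condition the paper only exhibits implicitly through its chosen matrices (e.g.\ $MM^{t}=4I_3$ in the example). The forward inclusion plus the cardinality count (which you could also get from the decomposition $C^{\perp}=\sum_i\eta_i\mathcal{A}_i^{\perp}$ of Lemma~2.2) is the right way to close the equality, and the application of the second part of Lemma~\ref{lemma css} with $d_1\geq d_G$ is the standard reading of the stated parameters.
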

\subsection{Computational results}
In this subsection, we provide an example to validate our results. A quantum error-correcting code $Q$ with parameters $[[n,k,d]]_{p^m}$ satisfies quantum Singleton bound $2d +k \leq n+2$. In case of equality, $Q$ is called quantum maximum-distance-separable(MDS) code. All the computations are performed by using the Magma computation system \cite{Magma}.
\begin{example} Let $\mathbb{F}_{25}=\mathbb{F}_{5}(t)$ where $t^2=t+3$ and $\mathcal{R}=\mathbb{F}_{5^2}+v\mathbb{F}_{5^2}+v^2\mathbb{F}_{5^2},\ v^3=v $. Let $\Theta$ be the Frobenius automorphism over $\mathbb{F}_{5^2}$ and $\sigma$ be the extension of $\Theta$ over $\mathcal{R}$, defined by $\sigma(a_0+a_1v+a_2 v^2)=\Theta(a_0)+\Theta(a_1)v+\Theta(a_2)v^2.$
	Let $\delta= 1-2v^2$ and $n=12$. Then $\lambda_0=1, \lambda_1=-1$ and $ \lambda_2=-1 $. Clearly, $\Theta$ fixes $\lambda_0, \lambda_1, \lambda_2$ and order of $\Theta$ divides $n$. Now, in $\mathbb{F}_{5^2}[x;\Theta]$, we have,
	\begin{align*}
	x^{12}-1=&(x^2 + (3t + 2)x + 2t + 1)(x^2 + 3t + 3)(x^2 + 2t + 1)\\
	& (x + t + 1)(x + 2t + 1)(x + 2t + 2)(x + t + 3)\\
	&(x^2 + (2t + 3)x + 3t + 3)\\
	x^{12}+1=&(x^2 + t + 1)(x^2 + 4t + 2)(x^2 + 4t + 4)(x^2 + t + 3)\\
	&(x + 4t)(x + 4t + 1)(x+ 4t + 3)^2.\\
	x^{12}+1=&(x^2 + t + 1)(x^2 + 4t + 2)(x^2 + 4t + 4)(x^2 + t + 3)\\
	&(x + 3t)(x + t + 4)(x + t)(x+3t+2)
	\end{align*}
	Let $f_0(x)=x^2 + (2t + 3)x + 3t + 3$, $f_1(x)=x + 4t + 3$ and $f_2(x)=x + 3t + 2$. Then ${C}=\langle \eta_0f_0(x)+\eta_1f_1(x) +\eta_2f_2(x) \rangle $ is a skew $(\sigma, \delta)$-constacyclic code over $\mathcal{R}$.
	 Here,
	\begin{align*}
	h_0(x)=&x^{10} + (3t + 2)x^9 + 2tx^8 + (2t + 3)x^7 + (3t + 4)x^6 \\
	+& x^4 + (3t + 2)x^3 + 2tx^2 + (2t + 3)x + 3t + 4,\\
	h_1(x)=&x^{11} + (4t + 3)x^{10} + 3x^9 + (2t + 4)x^8 + 4x^7\\
	 +& (t + 2)x^6 + 2x^5 + (3t + 1)x^4 + x^3 + (4t + 3)x^2\\
	 +& 3x + 2t + 4,\\
	h_2(x)=&x^{11} + 3tx^{10} + 3x^9 + 4tx^8 + 4x^7 + 2tx^6 + 2x^5 + tx^4 \\
	+& x^3 + 3tx^2 + 3x + 4t,\\
	h_0^*(x)=&(3t + 4)x^{10} + 3tx^9 + 2tx^8 + 2tx^7 + x^6 + (3t + 4)x^4\\
	 +& 3tx^3 + 2tx^2 + 2tx + 1,\\
	h_1^*(x)=&(3t + 1)x^{11} + 3x^{10} + (t + 2)x^9 + x^8 + (2t + 4)x^7\\
	 +& 2x^6 + (4t + 3)x^5 + 4x^4 + (3t + 1)x^3 + 3x^2\\
	 +& (t + 2)x + 1,\\
	h_2^*(x)= &(t + 4)x^{11} + 3x^{10} + (2t + 3)x^9 + x^8 + (4t + 1)x^7\\
	 +& 2x^6 + (3t + 2)x^5 + 4x^4 + (t + 4)x^3 + 3x^2\\
	  +& (2t + 3)x + 1,
	\end{align*}
	and
		\begin{align*}
	h^{*}_0(x)h_0(x)=&((3t + 4)x^8 + (2t + 1)x^6 + (3t + 4)x^2\\
	 +& 2t + 1)(x^{12}-1),\\
	h^{*}_1(x)h_1(x)=&((3t + 1)x^{10} + 4x^9 + (2t + 4)x^8+ (2t + 4)x^6\\
	+& x^5 + (3t + 1)x^4 + (3t + 1)x^2 + 4x\\
	 +& 2t + 4)(x^{12}+1),\\
	h^{*}_2(x)h_2(x)=&((t + 4)x^{10} + (3t + 1)x^9 + 4tx^8 + (4t + 1)x^6\\
	+& (2t + 4)x^5 + tx^4 + (t + 4)x^2 + (3t + 1)x\\
	 +& 4t)(x^{12}+1).
	\end{align*}

	Let  \[
	M=
	\left[ {\begin{array}{ccc}
		3&2&1 \\
		3&4&3 \\
		4&3&2
		\end{array} } \right]\in GL_3(\mathbb{F}_{5^2}),
	\] satisfying $MM^t=4I_3$. Then the Gray image $\psi({C})$ has the parameter $[36,32,3]$. Since, $h_0^{*}(x)h_{0}(x)$ is right divisible by  $(x^{12}-1)$ and $h_1^{*}(x)h_{1}(x), h_{2}^{*}(x)h_{2}(x)$ are right divisible by $(x^{12}+1)$. Thus, by Lemma \ref{thm dual}, $\mathcal{A}_i^{\perp}\subseteq \mathcal{A}_i$ for $i=0,1,2$, which implies $C^{\perp}\subseteq C$. Therefore, by Theorem \ref{Thm quant}, there exists a quantum code with parameter $[[36,28,3]]_{25}$.
\end{example}

In Table 1, we obtain many new quantum error-correcting codes from skew $(\sigma,\delta)$-constacyclic codes over $\mathcal{R}$. We use first column for writing the length of skew constacyclic code $C$ over $\mathcal{R}$ while second column is used for unit $\delta$ in $\mathcal{R}$. In third column we have written units $\lambda_i$ of $\mathbb{F}_{p^m}$ corresponding to $\delta$ whereas column fourth, fifth and sixth are used for writing the generator polynomials of skew $(\Theta,\lambda_i)$-constacyclic code $ \mathcal{A}_i$ over $\mathbb{F}_{p^m}$ for $i=0,1,2$, respectively. The parameter of the Gray images of skew $(\sigma,\delta)$-constacyclic codes are written in column seventh and last column denotes the constructed quantum codes. The coefficients of the generator polynomials $f_0(x),f_1(x)$ and  $f_2(x)$ in column $4,5$ and $6$ are given in ascending order, e.g., the polynomial $x^4+5tx^2+(2t+4)x+12$ as $(12)(2t+4)(5t)01$.\\

\begin{remark} In order to get the Gray images of skew constacyclic codes, we use the matrices
\begin{equation*}
\begin{bmatrix}
t^2&2&t \\
2&t&t^2 \\
t&t^2&2
\end{bmatrix},
\begin{bmatrix}
1&6&3 \\
4&1&6 \\
6&3&6
\end{bmatrix} ,
\begin{bmatrix}
7&4&2 \\
9&7&4 \\
4&2&4
\end{bmatrix} \text{and}
\begin{bmatrix}
9&4&2 \\
11&9&4 \\
4&2&4
\end{bmatrix}
\end{equation*}
for $\mathbb{F}_{3^2}$, $\mathbb{F}_{7^2}$, $\mathbb{F}_{11^{2}}$ and $\mathbb{F}_{13^{2}}$, respectively.
\end{remark}

\begin{table*}
	\caption{New Quantum codes $[[n,k,d]]_{p^m}$ from skew $(\sigma,\delta)$-constacyclic codes over the ring $\mathbb{F}_{p^m}+v\mathbb{F}_{p^m}+v^2 \mathbb{F}_{p^m}$}
	
	\renewcommand{\arraystretch}{1.5}
	\begin{center}
		
		\begin{tabular}{|c|c|c|c|c|c|c|c|c|}
			
			\hline
			$n$ & $\delta$ & $(\lambda_0,\lambda_1,\lambda_2)$ & $f_0(x)$ & $f_1(x)$   &$f_2(x)$ & $\psi(\mathcal{C})$ & $[[n,k,d]]_{p^m}$ \\
			\hline
			$12$ & $1$ & $(1,1,1)$ & $11$ & $(t)1$    &$(2t+1)1$& $[36,33,3]$ & $[[36,30,3]]_{9}$ \\
			$6$ & $-1$ & $(-1,-1,-1)$ & $1(4t+3)1$ & $21$    &$31$& $[18,14,4]$ & $[[18,10,4]]_{25}$ \\
			$10$ & $1$ & $(1,1,1)$ & $(3t+4)1$ &$1(t+3)1$& $(3t+3)1$   &  $[30,26,3]$ & $[[30,22,3]]_{25}$ \\
			$12$ & $1-2v^2$ & $(1,-1,-1)$&$(3t+3)(2t+3)1$ & $(4t+3)1$ & $(3t+2)1$   &  $[36,32,3]$ & $[[36,28,3]]_{25}$ \\
			$8$ & $1-2v^2$ & $(1,-1,-1)$ & $(t+3)1$ &$(5t+6)(2t+2)1$& $(t+3)(3t+6)1$   &  $[24,19,4]$ & $[[24,14,4]]_{49}$ \\
			$14$ & $2v^2-1$ & $(-1,1,1)$ & $(5t+4)1$ & $(t+3)1$ &$(2t+1)(3t)1$  &  $[42,38,3]$ & $[[42,34,3]]_{49}$ \\
			$14$ & $2v^2-1$ & $(-1,1,1)$ & $(5t+4)1$ & $(t+3)1$ &$(2t+1)1$  &  $[42,39,2]$ & $[[42,36,2]]_{49}$ \\
			$18$ & $1-2v^2$ & $(1,-1,-1)$ & $(3t+2)1$ & $(5t+2)1$ &$(6t)1$  &  $[54,51,3]$ & $[[54,48,3]]_{49}$ \\
			$10$ & $1-2v^2$ & $(1,-1,-1)$ & $(5t+9)(9t+2)1$ & $(7t+4)1$  &$(3t+8)1$ &  $[30,26,3]$ & $[[30,22,3]]_{121}$ \\
			$16$ & $1$ & $(1,1,1)$ & $(4t+3)1$&$(10t+1)(9t+7)1$ & $(5t+8)(7t+4)1$   &  $[48,43,4]$ & $[[48,38,4]]_{121}$ \\
			$20$ & $2v^2-1$ & $(-1,1,1)$ & $(7t+10)(4t+9)1$ & $(8t+7)1$ &$(2t+7)(t)1$  &  $[60,55,3]$ & $[[60,50,3]]_{121}$ \\
			$4$ & $-1$ & $(-1,-1,-1)$ &$(12t+11)1$& $(9t+6)1$ & $(t+6)1$   &  $[12,9,4]$ & $[[12,6,4]]_{169}$ \\
			$6$ & $-1$ & $(-1,-1,-1)$ & $(t+11)1$ & $81$ &$(2t+8)1$   &  $[18,15,4]$ & $[[18,12,4]]_{169}$ \\
			$8$ & $2v^2-1$ & $(-1,1,1)$ & $(2t+11)(7t+3)1$ & $(t+10)1$ &$(5t+4)1$  &  $[24,20,3]$ & $[[24,16,3]]_{169}$ \\
			
			\hline
		\end{tabular}\label{tab1}
	\end{center}
\end{table*}
\vfill

\section{Conclusion}

In this paper, we have studied the structure of skew constacyclic codes and their duals over a finite commutative non-chain ring $\mathbb{F}_{p^m}+v\mathbb{F}_{p^m}+v^2 \mathbb{F}_{p^m}$ where $v^3=v$ and obtained many new non-binary quantum codes compared to the best-known codes. Therefore, the recent trend to obtain quantum codes from constacyclic codes can be enlarged to skew constacyclic codes successfully to get better codes. This leads to many open problems for the application of skew constacyclic codes to quantum codes over different finite non-chain rings that appeared in \cite{Ashraf18,Gao15,Gao19a,Islam18b,Li18,Ma19}.

\section*{Acknowledgment}
The authors are thankful to the Council of Scientific \& Industrial Research (CSIR), Govt. of India for financial supports and Indian Institute of Technology Patna, India for providing research facilities.

\end{document}